\theoremstyle{plain}
\newtheorem{theorem}{Theorem}%
\newtheorem{cor}[theorem]{Corollary}%
\newtheorem{prop}[theorem]{Proposition}%
\theoremstyle{definition}
\newtheorem*{remark*}{Remark}%
\renewcommand\theta{\vartheta}%wir brauchen kein \theta, ansonsten auskommentieren und im Text ändern
\newcommand{\N}{\ensuremath{\mathbb{N}}}%
\newcommand{\R}{\ensuremath{\mathbb{R}}}%
\newcommand{\cF}{\ensuremath{\mathcal{F}}}%
\DeclareMathOperator{\prob}{P}
\DeclareMathOperator{\sign}{sign}
\DeclareMathOperator*{\argmin}{arg\,min}
\renewcommand\theta{\vartheta}%wir brauchen kein \theta, ansonsten auskommentieren und im Text ändern
\newcommand{\Kommentar}[1]{%
        \ifthenelse{\boolean{comment}}{#1}{}
        }
\begin{document}
\bibliographystyle{natbib}
%
%%%%%%%%%%%%%%%%%%%%%%%%%%%%%%%%%%%%%%%%%%%%%%%%%%%%%%%%%%%%%%%%%%%%%%%%%%%%%%%
%%%%%%%%%%%%%%%%       Titel   %%%%%%%%%%%%%%%%%%%%%%%%%%%%%%%%%%%%%%%%%%%%%%%%
%%%%%%%%%%%%%%%%%%%%%%%%%%%%%%%%%%%%%%%%%%%%%%%%%%%%%%%%%%%%%%%%%%%%%%%%%%%%%%%
%
{\title{Expectile Asymptotics}
%
%\textbf{Hajo Holzmann\footnote{Address for correspondence: Prof.~Dr.~Hajo Holzmann,
%Philipps-Universität Marburg,
%Fachbereich Mathematik und Informatik,
%Hans-Meerwein-Straße.
%D-35032 Marburg, Germany,
%Email: holzmann@mathematik.uni-marburg.de,
%Fon: + 49 6421 2825454
%}
% }\\*[0.4cm]
%
%{\sl Fachbereich Mathematik und Informatik, Philipps-Universität Marburg, Germany.}
\author{Hajo Holzmann \\
        \small{Fachbereich Mathematik und Informatik}  \\
        \small{Philipps-Universität Marburg} \\
        \small{holzmann@mathematik.uni-marburg.de}
    \and
        Bernhard Klar \\
        \small{Institut für Stochastik}  \\
        \small{Karlsruher Institut für Technologie (KIT)} \\
        \small{bernhard.klar@kit.edu} }
}
\maketitle
\begin{abstract}
We discuss in detail the asymptotic distribution of sample expectiles. First, we show uniform consistency under the assumption of a finite mean. In case of a finite second moment, we show that for expectiles other then the mean, only the additional assumption of continuity of the distribution function at the expectile implies asymptotic normality, otherwise, the limit is non-normal. For a continuous distribution function we show the uniform central limit theorem for the expectile process. If, in contrast, the distribution is heavy-tailed, and contained in the domain of attraction of a stable law with $1 < \alpha < 2$, then we show that the expectile is also asymptotically stable distributed.  Our findings are illustrated in a simulation section.
\end{abstract}
\noindent {\small {\itshape Keywords.}}\quad M-estimator, expectiles, convergence to stable distributions, asymptotic normality, uniform central limit theorem
%
%%%%%%%%%%%%%%%%%%%%%%%%%%%%%%%%%%%%%%%%%%%%%%
%%%%%%%%%%%%%%%         Introduction                       %%%%%%%%%%%%%%%
%%%%%%%%%%%%%%%%%%%%%%%%%%%%%%%%%%%%%%%%%%%%%%
%
\section{Introduction}
Expectile regression, that is, regression on a parameter that generalizes the mean and characterizes the tail behaviour of a distribution, has been introduced by  \citet{newey} as an alternative to more standard quantile regression; \citet{breckling} considered regression based on more general asymmetric M-estimators. For a recent comparison between quantile and expectile regression and references see \citet{kneib}.

Let $Y$ be a random variable with distribution function $F$ and finite mean $E|Y|<\infty$.
For a fixed $\tau \in (0,1)$, the $\tau$-expectile $\mu_{\tau} = \mu_{\tau} (F)$ of $Y$ has been introduced by \citet{newey} as the minimizer of an asymmetric quadratic loss
\begin{align}\label{eq:scoringfct}
\begin{split}
 \mu_{\tau}(F) & = \argmin_{x\in \mathbb{R}} E\, S_\tau(x,Y),\\
 S_\tau(x,y) & = \tau/2 \, \big[\, ((y-x)^{+})^{2} - (y^+)^2\big]+(1-\tau)/2\, \big[((y-x)^{-})^{2}- (y^-)^2\big].
\end{split}
\end{align}
Apparently, for $\tau = 1/2$ one obtains the mean. Alternatively to $S_\tau(x,y)$ in (\ref{eq:scoringfct}), one may use other scoring functions for the expectile; these were recently characterized by \citet[Theorem 10]{gneiting}.

Compared to quantiles, expectiles require the existence of a first moment and hence lack robustness. On the other hand, for any distribution with finite mean, the expectile is unique for each $\tau$, and the expectile curve is always strictly increasing and continuous.

More importantly, as a risk measure it has been shown recently that expectiles have the attractive property of coherence (see \citet{bellini}), while quantiles suffer from the lack of subadditivity. Indeed, expectiles were shown to be the only coherent, elicitable risk measures in \citet{ziegel14}; for a discussion and comparison between value at risk (quantiles), expectiles and expected shortfall see \citet{emmer}. Further discussion and application of expectiles as risk measures are given in \citet{delbaen} and \citet{bellini3}.

In this note we study in detail the statistical, that is, asymptotic properties of the sample expectiles. Somewhat surprisingly and in contrast to the mean, for $\tau \not=1/2$ we find that even under the assumption of a finite second moment, the sample expectile is only asymptotically normal if the distribution function $F$ is continuous at $\mu_\tau(F)$, otherwise, the limit distribution is non-normal.

First, in Section \ref{sec2} we show uniform consistency under the assumption of a finite mean. Next, in Section \ref{sec4} we show that if the distribution function $F$ is continuous at its $\tau$-expectile $\mu_\tau(F)$, there is an asymptotic linearization of the sample expectile for this $\tau$. In case of finite second moments, this implies asymptotic normality, but if $F$ is in the domain of attraction of a stable law, the sample expectile is also asymptotically stable distributed.
If $F$ has a jump at $\mu_\tau(F)$, we show in Section \ref{sec3} that also under the assumption of a finite second moment, the asymptotic distribution of the sample expectile is non-normal. Finally, for a continuous distribution function with second moments, we show the uniform central limit theorem for the expectile process.
We illustrate our findings in a simulation in Section \ref{sec:sims}, using the $t$-distribution with low degrees of freedom as a prototypical example for heavy-tailed distributions.
Based on an explicit representation of the expectile for discrete distributions, we exemplify the nonstandard asymptotic behavior of the empirical expectile by a three-point distribution. Proofs are deferred to Section \ref{sec:proofs}.

In a recent paper, \citet{kraetschmer} obtained results on the asymptotics of expectiles which are to some extend complementary to our results. Using a non-standard version of the functional delta-method allows them to treat both the case of dependent data as well as expectiles of parametric estimates of the distribution. However, they only consider the case of a finite second moment (they even assume slightly more) and a distribution which is continuous at the expectiles, and further do not investigate properties of the expectile process.
%
%%%%%%%%%%%%%%%%%%%%%%%%%%%%%%%%%%%%%%%%%%%%%%
%%%%%%%%%%%         Asymptotic properties of sample expectiles                      %%%%%
%%%%%%%%%%%%%%%%%%%%%%%%%%%%%%%%%%%%%%%%%%%%%%
%
\section{Asymptotic properties of sample expectiles}
\citet{newey} state a number of useful properties of expectiles, mainly for absolutely continuous distributions $F$. Below we state an extension, and in particular point out the assumptions on $F$ which are actually required.
Introducing the identification function
\begin{equation}\label{eq:identfct}
I_\tau(x,y) = \tau (y-x) 1_{ \{y \geq x\} } - (1-\tau)\, (x-y) 1_{ \{y < x\} }
\end{equation}
of the expectile, it is well-known that
$\mu_{\tau}(F)$ can equivalently be defined as unique solution of the first-order condition
\begin{align} \label{expectile-foc}
E I_\tau(x,Y) = 0, \quad x \in \R.
\end{align}
The following identity, obtained by a partial integration, is important for us:
\begin{align}\label{eq:partialint}
I_\tau(x,F) := & E I_\tau(x,Y)
		 = \tau \int_x^\infty \big(1 - F(y) \big)\, dy - (1-\tau) \int^x_{-\infty} \,F(y) \, dy.
\end{align}

\begin{prop}\label{lem:continuity}
Let $F$ be a distribution function with finite mean. \\
(i) For each $\tau \in (0,1)$ there is a unique solution $\mu_\tau(F)$ to (\ref{eq:scoringfct}) or, equivalently, to (\ref{expectile-foc}).\\
(ii) The function $\mu_\cdot(F): (0,1) \to \R$, $\tau \mapsto \mu_\tau(F)$, is continuous, strictly increasing, and has range $ \{y\in \R:\ 0 < F(y) < 1\}$. \\
(iii) If $F$ is continuous in a neighborhood of $\mu_{\tau}(F)$ for a given $\tau \in (0,1)$, then $\mu_{\cdot}(F)$ is continuously differentiable in a neighborhood of $\tau$ with derivative
\[ \partial_\tau\, \mu_{\tau}(F) = \frac{\int_{\mu_{\tau}}^\infty \big(1 - F(y) \big)\, dy +\int^{\mu_{\tau}}_{-\infty} \,F(y) \, dy}{\tau \big(1 - F\big(\mu_{\tau}\big) \big) + (1-\tau)\, F\big(\mu_{\tau}\big)} \, . \]
\end{prop}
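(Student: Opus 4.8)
The plan is to work throughout from the partial-integration formula \eqref{eq:partialint}, which expresses the first-order condition as $I_\tau(x,F)=0$ with
\[
I_\tau(x,F) = \tau \int_x^\infty \big(1 - F(y)\big)\, dy - (1-\tau) \int^x_{-\infty} F(y)\, dy.
\]
For \emph{(i)}, I would observe that $x \mapsto I_\tau(x,F)$ is continuous (both integrals are finite because $E|Y|<\infty$, and the map is even Lipschitz since the integrands are bounded by $1$), that it is strictly decreasing on the set where $0<F(x)<1$ (the first integral is strictly decreasing there, the second strictly increasing), and that it is constant on any interval on which $F\equiv 0$ or $F\equiv 1$. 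A careful check of the limits $x\to\pm\infty$ gives $I_\tau(x,F)\to \tau\,E(Y^{-}\!\text{-type tail})>0$ as $x\to-\infty$ and $\to$ a negative limit as $x\to+\infty$; combined with the monotonicity this yields a unique zero, which also gives uniqueness of the minimizer of the convex function $x\mapsto E\,S_\tau(x,Y)$. (One must note that $I_\tau(\cdot,F)$ cannot vanish on a flat stretch of $F$, since on such a stretch it is a nonzero constant — this is the one point needing a line of argument.)

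For \emph{(ii)}, monotonicity and continuity of $\tau\mapsto\mu_\tau(F)$ I would get from an implicit-function / monotone-inversion argument on the joint map $(\tau,x)\mapsto I_\tau(x,F)$: it is continuous in both variables, strictly decreasing in $x$ near the root, and strictly increasing in $\tau$ (rewrite $I_\tau(x,F) = \tau\,a(x) - (1-\tau)\,b(x)$ with $a(x)=\int_x^\infty(1-F),\ b(x)=\int_{-\infty}^x F \ge 0$, both $\ge 0$ and not both zero on the relevant range, so $\partial_\tau I_\tau = a(x)+b(x)>0$). Standard monotone-implicit-function reasoning then forces $\mu_\cdot(F)$ to be continuous and strictly increasing. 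For the range statement, solving $I_\tau(x,F)=0$ gives $\tau = b(x)/(a(x)+b(x))$; as $x$ runs over $\{0<F(x)<1\}$ this ratio runs continuously and strictly monotonically over all of $(0,1)$ (it tends to $0$ at the left end and to $1$ at the right end, using $E|Y|<\infty$ to control the tails), so the range of $\mu_\cdot(F)$ is exactly $\{y:0<F(y)<1\}$.

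For \emph{(iii)}, under continuity of $F$ near $\mu_\tau$, the defining equation $\tau\,a(\mu_\tau) = (1-\tau)\,b(\mu_\tau)$ has a left-hand side and right-hand side that are $C^1$ in $(\tau,x)$ near $(\tau,\mu_\tau)$: $a$ and $b$ are differentiable with $a'(x) = -(1-F(x))$, $b'(x) = F(x)$, and these are continuous exactly because $F$ is continuous there. The partial derivative in $x$ of $G(\tau,x):=\tau\,a(x)-(1-\tau)\,b(x)$ at the root is $-\big[\tau(1-F(\mu_\tau)) + (1-\tau)F(\mu_\tau)\big]$, which is strictly negative (it is $\ge \min(\tau,1-\tau)>0$ in absolute value using $0<F(\mu_\tau)<1$ from part (ii)). The implicit function theorem then gives $C^1$-dependence and
\[
\partial_\tau \mu_\tau(F) = -\frac{\partial_\tau G}{\partial_x G}\Big|_{(\tau,\mu_\tau)} = \frac{a(\mu_\tau)+b(\mu_\tau)}{\tau(1-F(\mu_\tau))+(1-\tau)F(\mu_\tau)},
\]
which is the claimed formula. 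The main obstacle is not any single hard estimate but rather the bookkeeping in \emph{(i)}--\emph{(ii)}: one has to treat carefully the degenerate cases where $F$ has flat pieces or atoms (so that $\{0<F<1\}$ is not all of $\R$ and $\mu_\tau$ may sit at a plateau of $I_\tau(\cdot,F)$ only from one side), and to justify the limit computations $a(x),b(x)\to 0$ at the respective ends using only $E|Y|<\infty$ — this is where the finite-mean hypothesis is genuinely used.
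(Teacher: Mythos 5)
Your proposal is essentially correct and follows the same route as the paper: everything is driven by the partial-integration identity $I_\tau(x,F)=\tau a(x)-(1-\tau)b(x)$ with $a(x)=\int_x^\infty(1-F)$, $b(x)=\int_{-\infty}^x F$, continuity of $\mu_\cdot(F)$ comes from a monotone-inversion argument on the jointly continuous map $(\tau,x)\mapsto I_\tau(x,F)$ (the paper does this with an explicit sequential argument $\tau_n\downarrow\tau$ plus uniqueness of the zero), and part \emph{(iii)} is the implicit function theorem with exactly the derivative $-\bigl[\tau(1-F(x))+(1-\tau)F(x)\bigr]$ you compute. The only real difference is that the paper outsources \emph{(i)} and most of \emph{(ii)} to Newey and Powell and only fills in the continuity statement, whereas you argue everything from scratch.

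One factual slip in your write-up of \emph{(i)} should be corrected, though it does not break the argument: $I_\tau(\cdot,F)$ is \emph{not} constant on intervals where $F\equiv 0$ or $F\equiv 1$. Its one-sided derivatives are $-\bigl[\tau(1-F(x\pm))+(1-\tau)F(x\pm)\bigr]$, which equal $-\tau$ where $F=0$ and $-(1-\tau)$ where $F=1$; hence $I_\tau(\cdot,F)$ is globally strictly decreasing with slope at most $-\min(\tau,1-\tau)$, and the ``flat stretch'' case you worry about simply does not arise --- uniqueness of the zero is immediate. Relatedly, the limits as $x\to\mp\infty$ are $+\infty$ and $-\infty$ (since $a(x)\ge EY-x$ and $b(x)\ge x-EY$), not finite positive and negative values; again this only strengthens the sign change you need.
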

\subsection{Sample expectiles and uniform consistency}\label{sec2}

In this section we show strong uniform consistency of sample expectiles.
Let $Y$ have distribution function $F$, with finite first moment $E_F|Y|=E|Y|<\infty$,
and let $Y_1, Y_2, \ldots $ be i.i.d. copies of $Y$, and let $\hat F_n$ be the empirical distribution function.
The empirical $\tau$-expectile
\[
 \hat \mu_{\tau,n} = \mu_\tau\big(\hat F_n\big)
\]
can be defined as solution of the equation
\begin{align} \label{Z-estimator}
I_\tau\big(x,\hat F_n\big) = \frac1n \sum_{k=1}^n I_\tau(x, Y_k) = 0.
\end{align}
This type of estimator is often termed Z-estimator, and a large amount of theory is available to obtain asymptotic properties for this type of estimators.
Alternatively, asymptotic results can be derived using the representation as an M-estimator, that is,
\begin{align} \label{M-estimator}
 \hat \mu_{\tau,n} = \text{argmin}_{x \in \R}\,
 \hat S_n(x),\qquad \hat S_n(x) = \frac{1}{n}\, \sum_{k=1}^n S_\tau(x, Y_k) = \int_\R S_\tau(x,y)\, d \hat F_n(y).
\end{align}
Here, any other scoring function for the expectile \citep{gneiting} could be used instead, they all result in the same estimator, the expectile of the empirical distribution function.

The measurability of $\hat \mu_{\tau,n}$ follows from Theorem (1.9) in \citet{pfanzagl}, who studied M-estimators under the heading of minimum contrast estimation. More directly, measurability follows from the explicit representation of $\hat \mu_{\tau,n}$ in Subsection \ref{sec32}.

\begin{theorem} \label{prop:uniformconsistency}
Let $Y,Y_1,Y_2,\ldots$ be i.i.d.~with distribution function $F$, and assume $E_F|Y|<\infty$. For any $\tau_l, \tau_u \in (0,1)$, $\tau_l < \tau_u$, we have
\begin{align*}
\sup_{\tau_l \leq \tau \leq \tau_u}\, \big| \hat{\mu}_{\tau,n} - \mu_{\tau}(F)\big| \ \to 0 \quad a.s.
\end{align*}
\end{theorem}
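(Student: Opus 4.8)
The plan is to argue via the first-order (Z-estimator) characterization: by Proposition~\ref{lem:continuity}(i), applied both to $F$ and (pathwise) to $\hat F_n$, the maps $x\mapsto I_\tau(x,F)$ and $x\mapsto I_\tau(x,\hat F_n)=:I_{\tau,n}(x)$ each have a unique zero, namely $\mu_\tau(F)$ and $\hat\mu_{\tau,n}$; moreover each of these maps is continuous and strictly decreasing, since $x\mapsto I_\tau(x,y)$ is continuous and piecewise linear with slopes $-\tau$ and $-(1-\tau)$, a property preserved under expectation. The proof then has two ingredients: a strong law for $I_{\tau,n}$ uniform in $(\tau,x)$ over a suitable compact set, and a uniform "well-separated zero" step turning this into consistency of the zeros.

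For the uniform strong law, the useful observation is the identity $I_\tau(x,y)=\tau\,|y-x|-(x-y)^{+}$, which isolates the $\tau$-dependence (it is affine in $\tau$, and $\tau$ multiplies only $|Y_k-x|$). Writing $A_n(x)=n^{-1}\sum_{k=1}^n|Y_k-x|$, $B_n(x)=n^{-1}\sum_{k=1}^n(x-Y_k)^{+}$, $A(x)=E|Y-x|$, $B(x)=E(x-Y)^{+}$ — all finite since $E|Y|<\infty$ — we have $I_{\tau,n}(x)=\tau A_n(x)-B_n(x)$ and $I_\tau(x,F)=\tau A(x)-B(x)$, so for any compact interval $K$,
\[
\sup_{\tau_l\le\tau\le\tau_u,\; x\in K}\bigl|I_{\tau,n}(x)-I_\tau(x,F)\bigr|\;\le\;\tau_u\sup_{x\in K}|A_n(x)-A(x)|+\sup_{x\in K}|B_n(x)-B(x)|.
\]
Since $x\mapsto|Y_k-x|$ and $x\mapsto(x-Y_k)^{+}$ are convex, the pointwise strong law on a countable dense subset of $\R$ together with the standard convexity lemma (pointwise convergence of convex functions forces local uniform convergence) gives $A_n\to A$ and $B_n\to B$ uniformly on compact sets, almost surely, hence the right-hand side tends to $0$ a.s. (Alternatively, $\{I_\tau(\cdot,\cdot)\}_{(\tau,x)\in[\tau_l,\tau_u]\times K}$ is a Glivenko--Cantelli class, having the integrable envelope $c_K(1+|y|)$.)

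For the second ingredient, take $K=[\mu_{\tau_l}(F)-1,\,\mu_{\tau_u}(F)+1]$, a compact interval by Proposition~\ref{lem:continuity}(ii), which also gives $\mu_\tau(F)\in[\mu_{\tau_l}(F),\mu_{\tau_u}(F)]$ for $\tau\in[\tau_l,\tau_u]$. For $0<\delta<1$ put
\[
\eta(\delta):=\min\Bigl(\;\inf_{\tau_l\le\tau\le\tau_u}I_\tau\bigl(\mu_\tau(F)-\delta,F\bigr),\;\;\inf_{\tau_l\le\tau\le\tau_u}\bigl(-I_\tau\bigl(\mu_\tau(F)+\delta,F\bigr)\bigr)\;\Bigr).
\]
Each quantity inside the infima is strictly positive because $x\mapsto I_\tau(x,F)$ is strictly decreasing with zero $\mu_\tau(F)$; and $\tau\mapsto I_\tau(\mu_\tau(F)\pm\delta,F)$ is continuous on the compact interval $[\tau_l,\tau_u]$ (joint continuity of $(\tau,x)\mapsto I_\tau(x,F)$ and continuity of $\tau\mapsto\mu_\tau(F)$), so the infima are attained and $\eta(\delta)>0$. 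On the event $\{\sup_{\tau_l\le\tau\le\tau_u,\,x\in K}|I_{\tau,n}(x)-I_\tau(x,F)|<\eta(\delta)\}$ we get, for every $\tau\in[\tau_l,\tau_u]$, that $I_{\tau,n}(\mu_\tau(F)-\delta)>0>I_{\tau,n}(\mu_\tau(F)+\delta)$; since $I_{\tau,n}$ is continuous, strictly decreasing, with unique zero $\hat\mu_{\tau,n}$, this forces $|\hat\mu_{\tau,n}-\mu_\tau(F)|<\delta$ for all $\tau\in[\tau_l,\tau_u]$ at once. By the first ingredient this event holds for all large $n$, a.s., so $\limsup_n\sup_{\tau_l\le\tau\le\tau_u}|\hat\mu_{\tau,n}-\mu_\tau(F)|\le\delta$ a.s.; letting $\delta\downarrow0$ along a sequence completes the proof.

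The only delicate point is the first ingredient — obtaining a strong law uniform in both $x$ and $\tau$ while assuming only $E|Y|<\infty$. This is precisely where the decomposition $I_\tau(x,y)=\tau|y-x|-(x-y)^{+}$ does the work, reducing matters to two ordinary strong laws with convex, locally Lipschitz summands; the remaining steps are bookkeeping.
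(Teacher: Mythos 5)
Your proof is correct, but it reaches uniformity by a genuinely different route than the paper. The paper first proves pointwise strong consistency for each fixed $\tau$ (via the same sign-change argument around $\mu_\tau(F)\pm\varepsilon$ that you use, but needing only the ordinary SLLN at two points), and then upgrades to uniformity in $\tau$ by a P\'olya/Glivenko--Cantelli chaining argument that exploits the fact that both $\tau\mapsto\hat\mu_{\tau,n}$ and $\tau\mapsto\mu_\tau(F)$ are monotone and that $\mu_\cdot(F)$ is continuous (Proposition~\ref{lem:continuity}(ii)): one picks a finite grid $\tau_0\le\cdots\le\tau_m$ with $\mu_{\tau_{k+1}}(F)-\mu_{\tau_k}(F)=d/m$ and sandwiches the supremum between grid values. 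You instead prove a uniform strong law for the estimating function $I_\tau(x,\hat F_n)$ jointly in $(\tau,x)$ over a compact set --- cleanly reduced, via $I_\tau(x,y)=\tau|y-x|-(x-y)^{+}$, to two strong laws for convex summands plus the convexity lemma --- and combine it with a uniform well-separation bound $\eta(\delta)>0$ obtained from compactness and the joint continuity of $(\tau,x)\mapsto I_\tau(x,F)$ and of $\tau\mapsto\mu_\tau(F)$. Both arguments are complete; the paper's is shorter because monotonicity of the expectile curve in $\tau$ does the uniformization for free, while yours is the generic ``uniform Z-estimator'' scheme and would survive in settings where the estimator is not monotone in the index parameter, at the cost of the extra uniform LLN and the compactness step for $\eta(\delta)$.
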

\subsection{Asymptotic linearization and convergence to stable distributions} \label{sec4}
Let us consider the representation (\ref{M-estimator}) of the sample expectile as an M-estimator. Asymptotic normality or, more generally, asymptotic linearization, requires that the asymptotic contrast function has a second order Taylor expansion at the true parameter.
Since $\left|\partial_x S_\tau(x,y)\right| = \left|I_\tau(x,y)\right| \le c (|x|+|y|)$ for a suitable constant $c$,
we may differentiate the {\sl asymptotic contrast function}
\begin{align}\label{eq:asympcontrast}
 \psi_\tau(x)  = ES_\tau(x,Y) =  \int S_\tau(x,y) \, d F(y)
 %= (1-\tau) \int_{- \infty}^x h(x,y) \, d F(y) + \tau \int_x^{ \infty} h(x,y)\, d F(y)
\end{align}
under the integral sign to obtain
\begin{align*}
 \psi_\tau'(x)  = - \, E I_\tau(x,Y) = : - I_\tau(x,F).
\end{align*}
We see from (\ref{eq:partialint}) that $ \psi_\tau'(x)$ has
\begin{align}\label{eq:derivatives}
\begin{split}
\text{right derivative}\quad \psi^{''+}_\tau(x) = & \, \tau \big(1 - F(x) \big) + (1-\tau)\, F(x)\\
\text{left derivative}\quad \psi^{''-}_\tau(x) = &  \, \tau \big(1 - F(x-) \big) + (1-\tau)\, F(x-)
\end{split}
\end{align}
at $x$, where $F(x-) = \prob(Y < x)$ is the left limit of $F$ at $x$. For $\tau= 1/2$ (i.e.~the mean), these are always equal, but generally only coincide at $\mu_\tau(F)$ if $F$ has no point mass in its $\tau$-expectile. From Theorems 1 and 10 in \citet{arcones} we deduce the following linearization.

\begin{theorem}[{\sl Asymptotic linearization}]\label{the:asymplinearization}
Let $Y,Y_1,Y_2,\ldots$ be i.i.d.~with distribution function $F$. Assume that $E_F|Y|<\infty$ and that $F$ is continuous at $\mu_\tau = \mu_\tau(F)$ for a given $\tau \in (0,1)$. Let $\{a_n\}$ be a sequence of positive numbers which converges to infinity with
$\sup_{n\geq 1} n^{-1} a_n^2 < \infty$, such that
\begin{align}\label{eq:boundedness}
\frac{a_n}{n}\,\sum_{k=1}^n \, I_\tau\left(\mu_{\tau}, Y_k \right) =O_{\prob}(1).
\end{align}
Then
\begin{equation}\label{eq:asymplinear}
a_n \left(\hat \mu_{\tau,n} -  \mu_{\tau} \right) = \frac{a_n}{n}\, \big(\tau \big(1 - F(\mu_{\tau}) \big) + (1-\tau)\, F(\mu_{\tau})\big)^{-1} \sum_{k=1}^n \, I_\tau\left(\mu_{\tau}, Y_k \right) + o_{\prob}(1).
\end{equation}
\end{theorem}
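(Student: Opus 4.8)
The plan is to invoke the general theory of M-estimators developed in \citet{arcones}, which provides exactly an asymptotic linearization (Bahadur-type representation) for M-estimators under conditions on the population contrast function. First I would verify that $\hat\mu_{\tau,n}$ is a genuine M-estimator associated with the criterion $\hat S_n(x) = n^{-1}\sum_k S_\tau(x,Y_k)$, which is the empirical counterpart of the asymptotic contrast $\psi_\tau(x) = E S_\tau(x,Y)$; this is immediate from \eqref{M-estimator}. Consistency $\hat\mu_{\tau,n}\to\mu_\tau$ a.s.\ is already available from Theorem \ref{prop:uniformconsistency}, so one has a consistent M-estimator sitting in a neighborhood of $\mu_\tau$ eventually.

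The crux is checking that $\psi_\tau$ has the right local smoothness at $\mu_\tau$. From \eqref{eq:derivatives}, the right and left second derivatives of $\psi_\tau$ at a point $x$ are $\tau(1-F(x))+(1-\tau)F(x)$ and $\tau(1-F(x-))+(1-\tau)F(x-)$ respectively. Under the hypothesis that $F$ is continuous at $\mu_\tau$, these two one-sided second derivatives coincide at $\mu_\tau$, so $\psi_\tau$ is twice differentiable at $\mu_\tau$ with
\begin{equation*}
V := \psi_\tau''(\mu_\tau) = \tau\bigl(1-F(\mu_\tau)\bigr) + (1-\tau)\,F(\mu_\tau) > 0,
\end{equation*}
and moreover $\psi_\tau'(\mu_\tau)=-I_\tau(\mu_\tau,F)=0$ by the first-order condition \eqref{expectile-foc}. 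Thus $\psi_\tau$ admits a genuine second-order Taylor expansion $\psi_\tau(x) = \psi_\tau(\mu_\tau) + \tfrac12 V (x-\mu_\tau)^2 + o((x-\mu_\tau)^2)$ near $\mu_\tau$, which is precisely the regularity required to enter the Arcones framework; the gradient process of the criterion is the empirical average of $-I_\tau(\mu_\tau,Y_k)$, whose normalized partial sums are controlled by the assumed bound \eqref{eq:boundedness}. Feeding these ingredients — consistency, the second-order expansion of $\psi_\tau$, the stochastic equicontinuity/smoothness of the scoring function $S_\tau$ (which is convex in $x$ with Lipschitz derivative $I_\tau$, hence well-behaved), and the rate condition $\sup_n n^{-1}a_n^2<\infty$ together with the $O_\prob(1)$ control on the normalized score — into Theorems 1 and 10 of \citet{arcones} yields the representation
\begin{equation*}
a_n(\hat\mu_{\tau,n}-\mu_\tau) = \frac{a_n}{n}\, V^{-1} \sum_{k=1}^n I_\tau(\mu_\tau,Y_k) + o_\prob(1),
\end{equation*}
which is \eqref{eq:asymplinear}.

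The main obstacle, as I see it, is not the algebra but the careful matching of hypotheses: Arcones' theorems are stated for fairly general M-estimators with a prescribed normalizing sequence, and one must confirm that the convexity and global growth bound $|I_\tau(x,y)|\le c(|x|+|y|)$ of the expectile loss, combined with continuity of $F$ at $\mu_\tau$, supply exactly the local quadratic behaviour and the empirical-process control those theorems demand — in particular that no extra moment assumption beyond $E_F|Y|<\infty$ is needed once \eqref{eq:boundedness} is imposed as a hypothesis. A secondary point to handle cleanly is that, because $F$ need not be continuous away from $\mu_\tau$, $\psi_\tau$ is in general only once continuously differentiable globally (with $\psi_\tau''$ having jumps at the atoms of $F$); one must make sure the linearization argument only uses the behaviour of $\psi_\tau$ in a shrinking neighborhood of $\mu_\tau$, where by assumption $F$ — and hence $\psi_\tau''$ — is continuous, so that the pointwise second-order expansion at $\mu_\tau$ suffices.
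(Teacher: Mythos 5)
Your overall strategy is the same as the paper's: represent $\hat\mu_{\tau,n}$ as an M-estimator for the contrast $S_\tau$, get consistency from Theorem \ref{prop:uniformconsistency}, establish the two-sided second-order Taylor expansion of $\psi_\tau$ at $\mu_\tau$ from the one-sided second derivatives in (\ref{eq:derivatives}) (which coincide precisely because $F$ has no atom at $\mu_\tau$), take (\ref{eq:boundedness}) as the normalized-score condition, and feed everything into Theorems 1 and 10 of Arcones. Up to the factor-of-two bookkeeping between your $V=\psi_\tau''(\mu_\tau)$ and the constant appearing in Arcones' quadratic expansion, this is exactly the paper's route.

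The one condition you do not actually verify is Arcones' remainder condition: one needs a function $\zeta$ with $E|\zeta(Y)|<\infty$ such that $E\bigl[\sup_{|x|\le\delta} x^{-2}\,\bigl|S_\tau(\mu_\tau+x,Y)-S_\tau(\mu_\tau,Y)+x\,I_\tau(\mu_\tau,Y)-x^2\zeta(Y)\bigr|\bigr]\to 0$ as $\delta\to 0$. Your observation that $S_\tau(\cdot,y)$ is convex with Lipschitz derivative only yields an $O(x^2)$ bound on the remainder, uniformly in $y$; that bounds the ratio by a constant but does not make it vanish. The paper's verification takes $\zeta(y)=\tfrac{\tau}{2}1_{\{y>\mu_\tau\}}+\tfrac{1-\tau}{2}1_{\{y<\mu_\tau\}}$ and computes the remainder explicitly: after subtracting $x^2\zeta(y)$ it is supported on the event $\{\mu_\tau-|x|\le Y\le\mu_\tau+|x|\}$ and bounded by $c\,x^2$ there, so the expectation of the supremum of the ratio over $|x|\le\delta$ is at most $c\,\prob(\mu_\tau-\delta\le Y\le\mu_\tau+\delta)$, which tends to $\prob(Y=\mu_\tau)=0$. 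This is the second, independent place where continuity of $F$ at $\mu_\tau$ is used --- your write-up invokes it only for the coincidence of the one-sided second derivatives. Your expectation that no moment assumption beyond $E_F|Y|<\infty$ is needed is confirmed, but the localisation of the remainder to a shrinking neighbourhood of $\mu_\tau$ is the idea that must be supplied; it is not a consequence of convexity and the Lipschitz property of $I_\tau$ alone.
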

\vspace{2mm}
{\sl Asymptotic normality}\\*[0.1cm]
In case of finite second moments, (\ref{eq:boundedness}) is satisfied with $a_n = \sqrt{n}$ by the central limit theorem, and we obtain asymptotic normality for a finite number of expectiles. In the following, we write $Y_n\stackrel{\mathcal L}{\to} F$ as a short-hand notation for $Y_n\stackrel{\mathcal L}{\to} Y \sim F$, where $F$ denotes the distribution function of $Y$.
\begin{cor}\label{cor:asympnorm}
Suppose that $E Y^2 < \infty$. Let $\tau_i \in (0,1)$, $i=1, \ldots, m$ be such that $F$ does not have a point mass at any of the $\mu_{\tau_i}$, $i=1, \ldots, m$. Then
\[ \sqrt{n} \big(\hat \mu_{\tau_1,n} -  \mu_{\tau_1}, \ldots, \hat \mu_{\tau_m,n} -  \mu_{\tau_m} \big)' \stackrel{\mathcal L}{\to} N\big( \mathbf{0}, \Sigma\big),\]
where
\begin{equation}\label{eq:covariance}
\Sigma_{i,j} = \frac{E\left[I_{\tau_i}\left(\mu_{\tau_i}, Y \right)\, I_{\tau_j}\left(\mu_{\tau_j}, Y \right) \right]}{\big(\tau_i \big(1 - F(\mu_{\tau_i}) \big) + (1-\tau_i)\, F(\mu_{\tau_i})\big)\, \big(\tau_j \big(1 - F(\mu_{\tau_j}) \big) + (1-\tau_j)\, F(\mu_{\tau_j})\big)}
\end{equation}
for $i,j=1, \ldots, m$.
%\end{equation}
%
\end{cor}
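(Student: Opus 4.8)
The plan is to derive the result from the asymptotic linearization of Theorem \ref{the:asymplinearization}, applied with $a_n=\sqrt n$ simultaneously for all of the $\tau_i$, followed by the multivariate central limit theorem and Slutsky's theorem.

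First I would check that the boundedness condition \eqref{eq:boundedness} holds with $a_n=\sqrt n$ for each $\tau_i$. Since $F$ has no point mass at $\mu_{\tau_i}=\mu_{\tau_i}(F)$, it is in particular continuous there, so Theorem \ref{the:asymplinearization} is applicable. Using $|I_{\tau_i}(\mu_{\tau_i},y)|\le c(|\mu_{\tau_i}|+|y|)$ together with $EY^2<\infty$, the random variable $I_{\tau_i}(\mu_{\tau_i},Y)$ has a finite second moment, and by the first-order condition \eqref{expectile-foc} it has mean zero. Hence $\frac1{\sqrt n}\sum_{k=1}^n I_{\tau_i}(\mu_{\tau_i},Y_k)=O_{\prob}(1)$ by Chebyshev's inequality, which is exactly \eqref{eq:boundedness} with $a_n=\sqrt n$.

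Next, set $c_i=\big(\tau_i (1-F(\mu_{\tau_i}))+(1-\tau_i)F(\mu_{\tau_i})\big)^{-1}$, which is well defined and positive because $0<F(\mu_{\tau_i})<1$ by Proposition \ref{lem:continuity}(ii). Applying \eqref{eq:asymplinear} in each coordinate gives
\begin{align*}
\sqrt n\,\big(\hat\mu_{\tau_1,n}-\mu_{\tau_1},\ldots,\hat\mu_{\tau_m,n}-\mu_{\tau_m}\big)' = \frac1{\sqrt n}\sum_{k=1}^n W_k + R_n ,
\end{align*}
where $W_k=\big(c_1 I_{\tau_1}(\mu_{\tau_1},Y_k),\ldots,c_m I_{\tau_m}(\mu_{\tau_m},Y_k)\big)'$ and each coordinate of the remainder $R_n$ is $o_{\prob}(1)$, so that $R_n=o_{\prob}(1)$ in $\R^m$. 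The key point is that all $m$ linearizations are expressed through one and the same sample $Y_1,\ldots,Y_n$, so that $W_1,W_2,\ldots$ are i.i.d. random vectors in $\R^m$; they have mean $\mathbf 0$ by \eqref{expectile-foc} and, since both factors are mean-zero and square integrable, covariance matrix $\Cov(W_k)$ with entries equal to the right-hand side of \eqref{eq:covariance}.

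Finally, the multivariate central limit theorem yields $\frac1{\sqrt n}\sum_{k=1}^n W_k\stackrel{\mathcal L}{\to}N(\mathbf 0,\Sigma)$, and Slutsky's theorem absorbs the $o_{\prob}(1)$ remainder $R_n$, giving the claim. There is no genuine obstacle here: the only points requiring a little care are the mean-zero and finite-variance properties of the $I_{\tau_i}(\mu_{\tau_i},Y)$ (needed both to verify \eqref{eq:boundedness} and to ensure $\Sigma$ is finite), and the observation that Theorem \ref{the:asymplinearization}, invoked for each $\tau_i$ on the common sample, already delivers a joint linear representation, which is what makes the joint central limit theorem available.
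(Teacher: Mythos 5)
Your proposal is correct and follows essentially the same route as the paper, which derives the corollary from Theorem \ref{the:asymplinearization} with $a_n=\sqrt n$ (the boundedness condition \eqref{eq:boundedness} holding because $I_{\tau_i}(\mu_{\tau_i},Y)$ is centered with finite variance) and then invokes the multivariate central limit theorem for the common i.i.d.\ sample. The only cosmetic difference is that you justify \eqref{eq:boundedness} via Chebyshev rather than citing the CLT directly, and you spell out the joint linearization and the Slutsky step, which the paper leaves implicit.
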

\vspace{2mm}
{\sl Convergence to stable distributions}\\*[0.1cm]
A random variable $X$ has an $\alpha$-stable distribution if its characteristic function is given by
\begin{align*}
E\left[ e^{iuX} \right] &=
  \begin{cases}
   \exp\left( - |u|^{\alpha} \left[ 1-i\beta \tan\left(\frac{\pi\alpha}{2}\right) \sign(u) \right]\right), & \alpha \neq 1, \\
   \exp\left( - |u| \left[ 1+i\beta \frac{2}{\pi} \sign(u) \log |u| \right] \right),      & \alpha = 1,
  \end{cases}
\end{align*}
where $0<\alpha\leq2, \ \beta\in[-1,1]$. Assume that $Y$ belongs to the {\sl domain of attraction} of an $\alpha$-stable distribution ($Y\in DA(\alpha)$) with $0<\alpha<2$ (see, e.g., \citet[Def. 2.2.7]{embrechts}). This is the case if and only if $Y$ has tail probabilities that satisfy
\begin{align}\label{eq:tailprobstable}
 P(Y > y) = \frac{c^+ + o(1)}{y^\alpha} L(y) \quad \mbox{and} \quad P(Y < -y) = \frac{c^- + o(1)}{y^\alpha} L(y), \quad  y\to\infty,
\end{align}
where $L$ is slowly varying and $c^+,c^- \geq 0$ with $c^+ + c^- > 0$ \citep[Th. 2.2.8]{embrechts}. In the following, we assume $1<\alpha<2$ to ensure that $E|Y|<\infty$.

\begin{cor} \label{theorem-stable}
Let $Y,Y_1,Y_2,\ldots$ be i.i.d. r.v. with distribution function $F \in DA(\alpha)$, where $1<\alpha<2$.
Assume further that $F$ has no point mass in $\mu_\tau$. Then,
\begin{align*}
\frac{n^{1-1/\alpha}}{L_1(n)}\,  \left( \hat \mu_{\tau,n}-\mu_{\tau}(F)\right) &\xrightarrow{\cal{L}} \frac{\tilde{Z}}{\tau \big(1 - F(\mu_{\tau}) \big) + (1-\tau)\, F(\mu_{\tau})}.
\end{align*}
Here, $\tilde{Z}$ follows an $\alpha$-stable distribution, and $L_1$ is an appropriate slowly varying function.
\end{cor}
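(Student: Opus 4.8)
The plan is to combine the asymptotic linearization of Theorem~\ref{the:asymplinearization} with the classical limit theorem for sums of i.i.d.\ summands in the domain of attraction of a stable law. Write $W_k = I_\tau(\mu_\tau, Y_k)$; by the first-order condition (\ref{expectile-foc}) these are centered, $E W_1 = I_\tau(\mu_\tau,F) = 0$. The first step is to determine the tail behaviour of $W = I_\tau(\mu_\tau, Y)$. From the definition (\ref{eq:identfct}), $I_\tau(\mu_\tau, y) = \tau\,(y-\mu_\tau)$ for $y\ge \mu_\tau$ and $I_\tau(\mu_\tau, y) = -(1-\tau)(\mu_\tau - y)$ for $y<\mu_\tau$, so for $w\to\infty$ one has $\prob(W>w) = \prob\big(Y>\mu_\tau + w/\tau\big)$ and $\prob(W<-w) = \prob\big(Y<\mu_\tau - w/(1-\tau)\big)$. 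Using (\ref{eq:tailprobstable}), that a finite shift of the argument does not alter a regularly varying tail, and that $L(cw)\sim L(w)$ for $c>0$, one obtains
\[ \prob(W>w) = \frac{\tau^\alpha c^+ + o(1)}{w^\alpha}\,L(w),\qquad \prob(W<-w) = \frac{(1-\tau)^\alpha c^- + o(1)}{w^\alpha}\,L(w), \]
i.e.\ $W\in DA(\alpha)$ with the same index $\alpha$ and with positive constant $\tau^\alpha c^+ + (1-\tau)^\alpha c^- > 0$, since $c^+ + c^- > 0$ and $0<\tau<1$.

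Next I would invoke the stable limit theorem (see, e.g., \citet[Sect.~2.2]{embrechts}): since the $W_k$ are i.i.d., centered, and in $DA(\alpha)$ with $1<\alpha<2$, there is a norming sequence $b_n = n^{1/\alpha} L_0(n)$ with $L_0$ slowly varying such that $b_n^{-1}\sum_{k=1}^n W_k \stackrel{\mathcal L}{\to}\tilde Z$, with $\tilde Z$ $\alpha$-stable and no further centering needed because $EW=0$. Setting $a_n = n/b_n = n^{1-1/\alpha}/L_1(n)$ with $L_1 := L_0$, this reads $\tfrac{a_n}{n}\sum_{k=1}^n I_\tau(\mu_\tau,Y_k) \stackrel{\mathcal L}{\to}\tilde Z$; in particular the left-hand side is $O_\prob(1)$, which is exactly condition (\ref{eq:boundedness}). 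Moreover $a_n\to\infty$ since $1-1/\alpha>0$, and $n^{-1}a_n^2$ is regularly varying of index $1-2/\alpha<0$, hence bounded; together with the hypothesis that $F$ has no point mass at $\mu_\tau$, all assumptions of Theorem~\ref{the:asymplinearization} are in force.

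Finally, I would plug into the linearization (\ref{eq:asymplinear}) and apply Slutsky's theorem: since the factor $\big(\tau(1-F(\mu_\tau))+(1-\tau)F(\mu_\tau)\big)^{-1}$ is a deterministic constant,
\[ a_n\big(\hat\mu_{\tau,n}-\mu_\tau\big) = \frac{a_n}{n}\,\frac{\sum_{k=1}^n I_\tau(\mu_\tau,Y_k)}{\tau(1-F(\mu_\tau))+(1-\tau)F(\mu_\tau)} + o_\prob(1)\ \stackrel{\mathcal L}{\longrightarrow}\ \frac{\tilde Z}{\tau(1-F(\mu_\tau))+(1-\tau)F(\mu_\tau)}, \]
which is the assertion; if desired, the skewness and scale parameters of $\tilde Z$ can be read off from the tail constants $\tau^\alpha c^+$ and $(1-\tau)^\alpha c^-$ computed in Step~1.

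The main obstacle I anticipate is Step~1 together with the precise bookkeeping of the norming constants: one must carefully justify that the asymptotically-but-not-exactly-linear map $y\mapsto I_\tau(\mu_\tau,y)$ — with its finite shift $\mu_\tau$ and its two different slopes $\tau$ and $1-\tau$ on the half-lines — yields a summand $W$ whose two tails are still regularly varying of index $\alpha$ with the stated constants, and then identify the norming sequence of the stable limit theorem in the regularly-varying form $n^{1-1/\alpha}/L_1(n)$, with centering at $EW=0$ rather than at truncated means (which is legitimate precisely because $1<\alpha<2$). Once this is in place, the rest is a routine application of Theorem~\ref{the:asymplinearization} and Slutsky's theorem.
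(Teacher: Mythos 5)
Your proposal is correct and follows essentially the same route as the paper: compute the regularly varying tails of $I_\tau(\mu_\tau,Y)$ to conclude $I_\tau(\mu_\tau,Y)\in DA(\alpha)$, apply the general stable CLT to verify condition (\ref{eq:boundedness}) and identify the norming sequence, then conclude via Theorem~\ref{the:asymplinearization} and Slutsky. Your added bookkeeping (centering at $EW=0$ via the first-order condition, checking $\sup_n n^{-1}a_n^2<\infty$ from $1-2/\alpha<0$) only makes explicit what the paper leaves implicit.
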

\begin{proof}
Since $F \in DA(\alpha)$, from (\ref{eq:tailprobstable}) we obtain that
\begin{align*}
 P(I(\mu_{\tau},Y)>y)  & = \tau^{\alpha} \, \frac{c^+ + o(1)}{y^\alpha} L(y)    \quad \mbox{and} \\
 P(I(\mu_{\tau},Y)<-y) & = (1-\tau)^{\alpha} \, \frac{c^- + o(1)}{y^\alpha} L(y)   \quad  \mbox{as } y\to\infty.
\end{align*}
Consequently, $I(\mu_{\tau},Y) \in DA(\alpha)$, and the general CLT \citep[Th. 2.2.15]{embrechts} yields
 \[ \left( n^{1/\alpha} L_1(n) \right)^{-1} \left( \sum_{k=1}^n I(\mu_{\tau},Y_k) - nEI(\mu_{\tau},Y) \right) \ \xrightarrow{\mathcal L} \ \tilde{Z}
 \quad \mbox{as } n\to\infty,
\]
where $\tilde{Z}$ follows an $\alpha$-stable distribution and $L_1$ is an appropriate slowly varying function. This implies that (\ref{eq:boundedness}) is satisfied, and an application of Theorem \ref{the:asymplinearization} together with the general CLT yields the statement of the corollary.
\end{proof}
Instead of using the assumptions $Y\in DA(\alpha)$, suppose more specifically that $Y$ belongs to the {\sl domain of normal attraction} of some $\alpha$-stable distribution with $1<\alpha<2$, i.e. $Y$ has tail probabilities that satisfy
\begin{equation}\label{eq:normalattract}
y^{\alpha}P(Y > y)\to c^+ \quad \text{and}\quad  y^{\alpha}P(Y < -y)\to c^-,\qquad y\to\infty,
\end{equation}
with $c^+ + c^- > 0$ and $1<\alpha<2$.
\begin{cor} \label{theorem-stable2}
Let $Y,Y_1,Y_2,\ldots$ be i.i.d. r.v. with distribution function $F$ that belongs to the normal domain of attraction of an $\alpha$-stable distribution, where $1<\alpha<2$, that is, satisfies (\ref{eq:normalattract}).
Assume further that $F$ has no point mass in $\mu_\tau$. Then
\begin{align*}
 n^{1-1/\alpha} \, \tilde{c} \, \left( \hat \mu_{\tau,n}-\mu_{\tau}\right) &\xrightarrow{\cal{L}}
 \frac{ S(\alpha,\tilde{\beta}) }{\tau \big(1 - F(\mu_{\tau}) \big) + (1-\tau)\, F(\mu_{\tau})},
\end{align*}
where
\begin{align*}
  \tilde{c} = \left( \frac{2\Gamma(\alpha)\sin(\pi\alpha/2)}{\pi(\tau^{\alpha} c^+ + (1-\tau)^{\alpha} c^-)} \right)^{1/\alpha},
  \quad \tilde{\beta} = \frac{\tau^{\alpha} c^+ - (1-\tau)^{\alpha} c^-}{\tau^{\alpha} c^+ + (1-\tau)^{\alpha} c^-)} \, .
\end{align*}
\end{cor}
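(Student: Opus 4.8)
The plan is to run the argument behind Corollary~\ref{theorem-stable}, the extra input being that in the \emph{normal} domain of attraction the relevant sums obey a limit theorem with the exact norming $n^{1/\alpha}$ (no slowly varying factor), and that the scale and skewness of the limiting stable law can be read off explicitly from the tail constants.

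First I would record the tail behaviour of the summands $I_\tau(\mu_\tau,Y)$. Since, by assumption, $F$ has no atom at $\mu_\tau$, it is in particular continuous there, which is what is needed to invoke Theorem~\ref{the:asymplinearization} and which makes the denominator $\tau(1-F(\mu_\tau))+(1-\tau)F(\mu_\tau)$ unambiguous. On $\{Y\ge\mu_\tau\}$ we have $I_\tau(\mu_\tau,Y)=\tau\,(Y-\mu_\tau)$ and on $\{Y<\mu_\tau\}$ we have $I_\tau(\mu_\tau,Y)=-(1-\tau)\,(\mu_\tau-Y)$, so that (\ref{eq:normalattract}) gives
\[
y^{\alpha}\,\prob\big(I_\tau(\mu_\tau,Y)>y\big)=y^{\alpha}\,\prob\big(Y>\mu_\tau+y/\tau\big)\ \to\ \tau^{\alpha}c^{+},
\qquad
y^{\alpha}\,\prob\big(I_\tau(\mu_\tau,Y)<-y\big)\ \to\ (1-\tau)^{\alpha}c^{-}
\]
as $y\to\infty$. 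Hence $I_\tau(\mu_\tau,Y)$ belongs to the normal domain of attraction of an $\alpha$-stable law with tail constants $\tau^{\alpha}c^{+}$ and $(1-\tau)^{\alpha}c^{-}$, and by the first-order condition (\ref{expectile-foc}) it has mean zero, so no centering sequence is needed. The classical stable limit theorem for this situation (see \citet[Th.~2.2.15]{embrechts} and the ensuing discussion) then yields
\[
n^{-1/\alpha}\sum_{k=1}^{n}I_\tau(\mu_\tau,Y_k)\ \xrightarrow{\mathcal L}\ Z,
\]
where $Z$ is $\alpha$-stable with skewness $\tilde\beta=\big(\tau^{\alpha}c^{+}-(1-\tau)^{\alpha}c^{-}\big)/\big(\tau^{\alpha}c^{+}+(1-\tau)^{\alpha}c^{-}\big)$ and scale $1/\tilde c$, so that $\tilde c\,Z=S(\alpha,\tilde\beta)$ in the parametrisation used in the statement.

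The step requiring care is matching this parametrisation with the tail constants: one uses that the classical norming constant can be rewritten as
\[
\frac{1-\alpha}{\Gamma(2-\alpha)\cos(\pi\alpha/2)}=\frac{2\,\Gamma(\alpha)\sin(\pi\alpha/2)}{\pi},
\]
which follows from $\Gamma(2-\alpha)=(1-\alpha)\Gamma(1-\alpha)$, the reflection formula $\Gamma(\alpha)\Gamma(1-\alpha)=\pi/\sin(\pi\alpha)$ and $\sin(\pi\alpha)=2\sin(\pi\alpha/2)\cos(\pi\alpha/2)$; this identity turns the standard scale of $Z$ into the expression $1/\tilde c$ of the corollary, and it is really the only nonroutine point. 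Having done this, the displayed convergence shows that (\ref{eq:boundedness}) holds with $a_n=n^{1-1/\alpha}$; moreover $a_n\to\infty$ and, since $1<\alpha<2$, $n^{-1}a_n^{2}=n^{1-2/\alpha}\to0$, so $\sup_{n\ge1}n^{-1}a_n^{2}<\infty$. Theorem~\ref{the:asymplinearization} is therefore applicable and gives
\[
n^{1-1/\alpha}\big(\hat\mu_{\tau,n}-\mu_\tau\big)=\big(\tau(1-F(\mu_\tau))+(1-\tau)F(\mu_\tau)\big)^{-1}\,n^{-1/\alpha}\sum_{k=1}^{n}I_\tau(\mu_\tau,Y_k)+o_{\prob}(1).
\]
Multiplying by $\tilde c$ and combining $\tilde c\,n^{-1/\alpha}\sum_{k=1}^{n}I_\tau(\mu_\tau,Y_k)\xrightarrow{\mathcal L}S(\alpha,\tilde\beta)$ with Slutsky's theorem yields the claim.
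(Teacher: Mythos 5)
Your proposal is correct and follows the same route as the paper: you compute the tail constants of $I_\tau(\mu_\tau,Y)$ from (\ref{eq:normalattract}), invoke the stable CLT for the normal domain of attraction, and feed the resulting $O_\prob(1)$ bound with $a_n=n^{1-1/\alpha}$ into Theorem~\ref{the:asymplinearization}. The paper's own proof is essentially a two-line citation of the same result in Nolan; your additional details (the mean-zero centering, the verification of $\sup_n n^{-1}a_n^2<\infty$, and the Gamma-function identity matching the scale $\tilde c$) are all accurate.
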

\begin{proof}
Since
\begin{align*}
 y^{\alpha}P(I(\mu_{\tau},Y)>y)  & \to \tau^{\alpha} \, c^+     \quad \mbox{and} \quad
 y^{\alpha}P(I(\mu_{\tau},Y)<-y)  \to (1-\tau)^{\alpha} \, c^-   \quad  \mbox{as } y\to\infty,
\end{align*}
this follows from the general CLT for distributions in the normal domain of attraction of a corresponding stable law \citep[p.~22]{nolan}.
\end{proof}
\subsection{Further asymptotics under finite second moments} \label{sec3}
Suppose that $Y \sim F$ with $E\, Y^2 < \infty$ and $Var\, Y >0$. In contrast to the mean, asymptotic normality of general expectiles as in Corollary \ref{cor:asympnorm} actually requires the additional assumption that $Y$ has no point mass at $\mu_\tau(F)$, otherwise, the limit distribution is non-normal, as the following result shows.
\begin{theorem} \label{them:asymdirtsecondmom}
Let $Y,Y_1,Y_2,\ldots$ be i.i.d.~with distribution function $F$ with $E\, Y^2 < \infty$. Let $\tau \in (0,1)$ and denote $\mu_\tau = \mu_\tau(F)$. Then
\begin{align} \label{limit-distr}
  \sqrt{n} \left(\hat \mu_{\tau,n} -  \mu_{\tau} \right) \stackrel{\mathcal L}{\to} \sigma_1 \, W \, 1_{W>0} + \sigma_2 \, W \, 1_{W<0},
\end{align}
where $W \sim N\big(0, E [ I_\tau(\mu_{\tau}, Y )^2 ] \big)$,
%
%\begin{align} \label{asymptotic-var}
%\sigma_1 \ = \  \left[\tau \big(1 - F\big(\mu_{\tau}\big) \big) + (1-\tau)\, F\big(\mu_{\tau}\big)  \right]^{-1} , \quad
%\sigma_2 \ = \ \left[\tau \big(1 - F\big(\mu_{\tau}-\big) \big) + (1-\tau)\, F\big(\mu_{\tau}-\big)  \right]^{-1},
%\end{align}
%
\begin{align} \label{asymptotic-var}
\sigma_1 \ = \  \frac{1}{ \tau \big(1 - F\big(\mu_{\tau}\big) \big) + (1-\tau)\, F\big(\mu_{\tau}\big) }, \quad
\sigma_2 \ = \  \frac{1}{ \tau \big(1 - F\big(\mu_{\tau}-\big) \big) + (1-\tau)\, F\big(\mu_{\tau}-\big) },
\end{align}
and $F(x-) = P(Y<x)$ denotes the left limit of $F$ at $x$.
\end{theorem}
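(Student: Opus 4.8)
The plan is to set up an M-estimation argument that localizes around $\mu_\tau$ but keeps track of the \emph{one-sided} second derivatives of the asymptotic contrast $\psi_\tau$, since by \eqref{eq:derivatives} these differ precisely when $F$ has a jump at $\mu_\tau$. Write $\hat\mu_{\tau,n}=\mu_\tau+n^{-1/2}t$ and study the recentered, rescaled empirical criterion
\[
Q_n(t) \;=\; n\Bigl(\hat S_n\bigl(\mu_\tau+n^{-1/2}t\bigr)-\hat S_n(\mu_\tau)\Bigr).
\]
Since $S_\tau(\cdot,y)$ is differentiable with $\partial_x S_\tau(x,y)=-I_\tau(x,y)$, a first-order Taylor expansion in $x$ gives the decomposition
\[
Q_n(t) \;=\; -\,t\,\frac{1}{\sqrt n}\sum_{k=1}^n I_\tau(\mu_\tau,Y_k) \;+\; n\,\bigl(\psi_\tau(\mu_\tau+n^{-1/2}t)-\psi_\tau(\mu_\tau)+n^{-1/2}t\,\psi_\tau'(\mu_\tau)\bigr)\;+\;R_n(t),
\]
where $R_n(t)$ collects the centered empirical fluctuations of the increment $S_\tau(\mu_\tau+n^{-1/2}t,Y)-S_\tau(\mu_\tau,Y)+n^{-1/2}t\,I_\tau(\mu_\tau,Y)$. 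First I would show $R_n(t)\to 0$ in probability for each fixed $t$ (and uniformly on compacts): the increment is of order $O(n^{-1}t^2)$ deterministically plus a term controlled by $\int |1_{y\ge\mu_\tau+n^{-1/2}t}-1_{y\ge\mu_\tau}|\,dF$, so its variance is $o(n^{-1})$, using $EY^2<\infty$ and that the jump of $F$ at $\mu_\tau$ contributes only a bounded indicator, not a mass that survives rescaling in the \emph{quadratic} part.

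The deterministic middle term is where the jump matters. Using $\psi_\tau'(x)=-I_\tau(x,F)$ together with the one-sided derivatives in \eqref{eq:derivatives}, a second-order Taylor expansion with one-sided curvature gives
\[
n\,\bigl(\psi_\tau(\mu_\tau+n^{-1/2}t)-\psi_\tau(\mu_\tau)+n^{-1/2}t\,\psi_\tau'(\mu_\tau)\bigr)\;\longrightarrow\;\tfrac12\,\psi_\tau''^{+}(\mu_\tau)\,t^2\,1_{t>0}\;+\;\tfrac12\,\psi_\tau''^{-}(\mu_\tau)\,t^2\,1_{t<0},
\]
because $\psi_\tau$ is convex with $\psi_\tau'$ right- and left-differentiable at $\mu_\tau$ and $\psi_\tau'(\mu_\tau)=0$; the only subtlety is verifying that $\psi_\tau'$ is genuinely differentiable from each side at $\mu_\tau$, which follows from \eqref{eq:partialint} since $x\mapsto\int_x^\infty(1-F)$ and $x\mapsto\int_{-\infty}^x F$ have one-sided derivatives $-(1-F(x)),-(1-F(x-))$ and $F(x),F(x-)$ respectively. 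Combining, $Q_n(t)$ converges (in the sense of finite-dimensional distributions, jointly over $t$) to
\[
Q(t)\;=\;-\,Z\,t\;+\;\tfrac12\,\bigl(\tau(1-F(\mu_\tau))+(1-\tau)F(\mu_\tau)\bigr)t^2\,1_{t\ge 0}\;+\;\tfrac12\,\bigl(\tau(1-F(\mu_\tau-))+(1-\tau)F(\mu_\tau-)\bigr)t^2\,1_{t<0},
\]
where $Z=\lim n^{-1/2}\sum_k I_\tau(\mu_\tau,Y_k)\sim N(0,E[I_\tau(\mu_\tau,Y)^2])$ by the CLT (finite second moment). Minimizing the piecewise-parabolic $Q$ explicitly: for $Z>0$ the minimizer is $t^*=Z/\psi_\tau''^{+}=\sigma_1 Z$, for $Z<0$ it is $t^*=Z/\psi_\tau''^{-}=\sigma_2 Z$, which is exactly the claimed limit with $W=Z$ (note $1_{W>0}$ and the sign of $t^*$ agree since $\sigma_1,\sigma_2>0$).

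The final step is the standard argmax/argmin continuous-mapping argument: $Q_n$ converges to $Q$ in an appropriate function space (uniformly on compacta, with $Q$ having a unique minimizer a.s.\ since $Z\ne 0$ a.s.\ when $\operatorname{Var}Y>0$), and $\sqrt n(\hat\mu_{\tau,n}-\mu_\tau)=\argmin_t Q_n(t)$ is $O_{\prob}(1)$ by Theorem~\ref{prop:uniformconsistency} plus a lower bound on the curvature of $Q_n$ away from zero; then the argmin continuous mapping theorem (e.g.\ van der Vaart--Wellner, or the convexity-based version since each $Q_n$ is convex) yields $\sqrt n(\hat\mu_{\tau,n}-\mu_\tau)\stackrel{\mathcal L}{\to}\argmin Q$. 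I expect the main obstacle to be the tightness/$O_{\prob}(1)$ control of $\sqrt n(\hat\mu_{\tau,n}-\mu_\tau)$ together with the uniform-on-compacta convergence of $Q_n$: one must ensure the empirical remainder $R_n(t)$ is negligible uniformly over $|t|\le M$, which requires a maximal inequality or a monotonicity/convexity argument rather than just a pointwise second-moment bound. Convexity of $t\mapsto Q_n(t)$ is the clean way around this — pointwise convergence of convex functions to a convex limit with unique minimizer upgrades automatically to convergence of minimizers — so I would lean on that to keep the argument short.
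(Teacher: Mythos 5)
Your proposal is correct, and its core --- localizing the empirical criterion at scale $n^{-1/2}$, splitting $Q_n$ into the linear term driven by $n^{-1/2}\sum_k I_\tau(\mu_\tau,Y_k)$, a deterministic drift with the \emph{one-sided} curvatures $\psi^{''\pm}_\tau(\mu_\tau)$ from (\ref{eq:derivatives}), and a remainder bounded pointwise by $C t^2/n$ and hence negligible, then minimizing the resulting piecewise parabola --- is exactly the paper's argument, including the limit process (\ref{eq:limitprocess}) and its minimizer $\sigma_1 W 1_{W>0}+\sigma_2 W 1_{W<0}$. Where you genuinely diverge is the final technical step. The paper does \emph{not} exploit convexity: it verifies that $x\mapsto S_\tau(x,y)$ is Lipschitz near $\mu_\tau$ with square-integrable Lipschitz constant, feeds this into van der Vaart's rate theorem (Theorem 5.52) to get $\sqrt n(\hat\mu_{\tau,n}-\mu_\tau)=O_{\prob}(1)$, controls the empirical remainder \emph{uniformly} over $|h|\le M$ via the maximal inequality behind Lemma 19.31, and concludes with the argmax continuity theorem (Corollary 5.58). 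You instead invoke convexity of $t\mapsto Q_n(t)$ and the Hjort--Pollard device that pointwise convergence of convex random functions to a convex limit with unique minimizer upgrades automatically to convergence of minimizers --- precisely the alternative the paper mentions after the theorem but does not carry out. Your route is shorter (no maximal inequality, no separate tightness step) at the cost of checking that the limit $Q$ is itself convex with an a.s.\ unique minimizer; it is, since $Q'$ is continuous at $0$ and both one-sided curvatures are strictly positive, so uniqueness in fact holds even on the null event $W=0$ and you do not need $\Var Y>0$ for that step. Two small points: the term $n^{-1/2}t\,\psi_\tau'(\mu_\tau)$ in your decomposition vanishes identically because $\psi_\tau'(\mu_\tau)=-EI_\tau(\mu_\tau,Y)=0$, which is also why your linear term is already centered; and your variance bound on $R_n(t)$ should be stated as $\Var(\Delta)=O(n^{-2})$ per summand (from the deterministic bound $|\Delta|\le Ct^2/n$), giving $\Var(R_n(t))=O(n^{-1})\to 0$, rather than the looser ``$o(n^{-1})$'' phrasing.
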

%
% Further, note that
% \begin{align*}
%  \partial_x I(x,y) =  (-1) \left( (1-\tau) 1_{ \{x > y\} } + \tau 1_{ \{y > x\} } \right) , \quad x\neq y.
% \end{align*}
%
We prove Theorem \ref{them:asymdirtsecondmom} by using empirical process methods and the argmax continuity theorem as presented in \citet{vaart}. Alternatively one could exploit the convexity of the contrast and modify the assumptions and the proof in \citet[Theorem 2.1]{hjort} to give an alternative argument.

In case of a continuous distribution function, we also have convergence of the expectile process.
\begin{theorem}\label{satz:cltexpec}
Let $Y,Y_1,Y_2,\ldots$ be i.i.d.~with distribution function $F$ with $E\, Y^2 < \infty$. Let $ 0 < \tau_l < \tau_u < 1$ and suppose that $F$ is continuous in a neighborhood of $\big[\mu_{\tau_l}, \mu_{\tau_u} \big]$. Then the sequence of processes
\begin{equation}\label{eq:expectileprocess}
 \tau \mapsto \big(\sqrt{n} \, (\hat \mu_{\tau, n} - \mu_\tau )\big)_{n \geq 1},\qquad \tau \in  \big[\tau_l, \tau_u \big],
\end{equation}
converges weakly in $C\big[\tau_l, \tau_u \big]$ to a Gaussian process with continuous sample paths and covariance function given in (\ref{eq:covariance}).
\end{theorem}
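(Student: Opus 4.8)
The plan is to establish weak convergence of the expectile process in $C[\tau_l,\tau_u]$ by combining finite-dimensional convergence (which follows from Corollary \ref{cor:asympnorm}) with asymptotic tightness, and then identifying the limit via the linearization of Theorem \ref{the:asymplinearization}. The natural route is to deduce everything from the functional central limit theorem for a suitable empirical process and a stochastic equicontinuity argument. Concretely, since $F$ is continuous on a neighborhood of the compact interval $[\mu_{\tau_l},\mu_{\tau_u}]$, Proposition \ref{lem:continuity}(iii) gives that $\tau\mapsto\mu_\tau$ is a $C^1$ diffeomorphism of $[\tau_l,\tau_u]$ onto $[\mu_{\tau_l},\mu_{\tau_u}]$; hence it suffices to prove the analogous statement indexed by $x=\mu_\tau$ ranging over $[\mu_{\tau_l},\mu_{\tau_u}]$ and then transfer back through this smooth reparametrization, which preserves weak convergence in $C$ and transforms the covariance as stated.

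The first main step is a uniform linearization: I would show that
\begin{equation*}
\sup_{\tau\in[\tau_l,\tau_u]}\left| \sqrt{n}\,(\hat\mu_{\tau,n}-\mu_\tau) - \frac{1}{d(\tau)}\cdot\frac{1}{\sqrt{n}}\sum_{k=1}^n I_\tau(\mu_\tau,Y_k)\right| = o_{\prob}(1),
\end{equation*}
where $d(\tau)=\tau(1-F(\mu_\tau))+(1-\tau)F(\mu_\tau)$ is bounded away from $0$ and $1$ on $[\tau_l,\tau_u]$. This upgrades the pointwise statement of Theorem \ref{the:asymplinearization} (with $a_n=\sqrt n$, which applies under $EY^2<\infty$ and continuity of $F$) to a uniform one. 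The standard tool is a class of Z-estimators argument: set $g_{\tau}(y)=I_\tau(\mu_\tau,y)$ and note $\{g_\tau:\tau\in[\tau_l,\tau_u]\}$ is a Donsker class (it is a uniformly bounded-by-envelope, Lipschitz-in-$\tau$, finite-dimensional-parameter family, with square-integrable envelope $c(|x|+|Y|)$ since $EY^2<\infty$), so $\sqrt n(\hat F_n-F)$ evaluated on this class converges to a tight Gaussian process and is stochastically equicontinuous. Combining stochastic equicontinuity of $\sqrt n(I_\tau(\cdot,\hat F_n)-I_\tau(\cdot,F))$ near $x=\mu_\tau$ with the differentiability in \eqref{eq:derivatives} (here $\psi_\tau''$ is continuous because $F$ is continuous) and the uniform consistency from Theorem \ref{prop:uniformconsistency} yields the uniform linearization by the usual Z-estimator expansion, uniformly in $\tau$.

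Given the uniform linearization, the expectile process is, up to $o_{\prob}(1)$ in sup-norm, the image under the continuous (in fact bounded, since $1/d(\tau)$ is continuous on the compact interval) linear map $h\mapsto h/d(\cdot)$ of the empirical process $\tau\mapsto \frac1{\sqrt n}\sum_k g_\tau(Y_k)$. The latter converges weakly in $\ell^\infty[\tau_l,\tau_u]$ to a tight Gaussian process $\mathbb G\circ g$ by the Donsker property; dividing by the deterministic continuous function $d$ and invoking the continuous mapping theorem gives weak convergence to a Gaussian process $\tau\mapsto \mathbb G(g_\tau)/d(\tau)$ with covariance $\Cov(\mathbb G(g_{\tau_i})/d(\tau_i),\mathbb G(g_{\tau_j})/d(\tau_j)) = E[g_{\tau_i}(Y)g_{\tau_j}(Y)]/(d(\tau_i)d(\tau_j))$, which is exactly \eqref{eq:covariance}. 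Sample-path continuity of the limit follows from continuity of $\tau\mapsto g_\tau$ in $L^2(F)$ (again a consequence of continuity of $F$) together with the intrinsic-metric characterization of Gaussian-process paths, or simply because the limit lives in the closure of $C[\tau_l,\tau_u]$; transferring back via the $C^1$ map $\tau\mapsto\mu_\tau$ completes the proof.

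The step I expect to be the main obstacle is the \emph{uniform} linearization — in particular verifying stochastic equicontinuity of the relevant empirical process jointly in $(\tau,x)$ in a shrinking neighborhood of the graph $\{(\tau,\mu_\tau)\}$, and checking carefully that the class $\{I_\tau(x,\cdot): \tau\in[\tau_l,\tau_u],\, |x-\mu_\tau|\le\delta\}$ is Donsker with an $L^2$ envelope (this is where $EY^2<\infty$ is essential and cannot be weakened, in contrast to the pointwise result). The continuity of $F$ is used precisely to guarantee that $d(\tau)$ and the "denominator" $\psi_\tau''(\mu_\tau)$ vary continuously and do not degenerate, ruling out the non-normal phenomenon of Theorem \ref{them:asymdirtsecondmom}; one must keep track of this uniformly, ensuring $\inf_\tau d(\tau)>0$.
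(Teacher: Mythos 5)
Your proposal follows essentially the same route as the paper's proof: the paper invokes the functional Z-estimator theorem of van der Vaart (1995), whose hypotheses are exactly your ingredients --- uniform consistency, a uniform (Fr\'echet-differentiability) linearization with non-degenerate denominator $d(\tau)=\tau(1-F(\mu_\tau))+(1-\tau)F(\mu_\tau)$, the Donsker property of $\{I_\tau(\mu_\tau,\cdot)\}$ via the Lipschitz-in-$\tau$ estimate, stochastic equicontinuity of the two-parameter class $\{I_\tau(\mu_\tau+x,\cdot)\}$ near the graph, and sample-path continuity from the $L^2$-Lipschitz bound. The only cosmetic difference is your proposed reparametrization by $x=\mu_\tau$, which the paper does not need; otherwise you have correctly identified both the structure and the genuinely delicate step.
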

\citet{tran} also show convergence of the expectile process. They argue via convergence of an associated quantile process, and therefore require that $F$ has a density, further, they do not specify the covariance function of the limit process.

Theorem \ref{them:asymdirtsecondmom} shows that process convergence, at least in $C\big[\tau_l, \tau_u \big]$ or even in $l^\infty\big[\tau_l, \tau_u \big]$, cannot be expected if $F$ has a discontinuity in $[\tau_l, \tau_u ]$, since in this case the limit process would be discontinuous as well.
%
%%%%%%%%%%%%%%%%%%%%%%%%%%%%%%%%%%%%%%%%%%%%%%
%%%%%%%%%%%%%         Some Simulations                      %%%%%%%%%%%%%%%
%%%%%%%%%%%%%%%%%%%%%%%%%%%%%%%%%%%%%%%%%%%%%%
%
\section{Some Simulations}\label{sec:sims}

\subsection{Illustration of convergence to a stable distribution}

As an example for a distribution with finite expectation but infinite variance we consider Student's $t$-distribution $t_\alpha, 1<\alpha<2$, with symmetric density
\begin{align*}
 f_\alpha(x) &= \frac{\Gamma((\alpha+1)/2)}{\Gamma(\alpha/2) \sqrt{\alpha\pi}} \, \left( 1+\frac{x^2}{\alpha} \right)^{-\frac{\alpha+1}{2}}, \quad x\in \R.
\end{align*}
For $Y_\alpha \sim t_\alpha$,
\begin{align*}
 \lim_{y\to\infty} y^\alpha P(Y_\alpha > y) &=  \lim_{y\to\infty} y^\alpha P(Y_\alpha < -y) = \frac{\Gamma((\alpha+1)/2)}{\Gamma(\alpha/2)} \, \frac{ \alpha^{\alpha/2-1}}{\sqrt{\pi}}.
\end{align*}
Accordingly,  $t_\alpha$ belongs to the domain of normal attraction of some $\alpha$-stable distribution.
To compute the theoretical $\tau$-expectile, which is the unique solution of
\begin{align} \label{exp1}
 \mu_{\tau}-EY &= \frac{2\tau-1}{1-\tau} \ E\left[ (Y-\mu_{\tau})^+\right],
\end{align}
one can use the identity
\begin{align*}
 E\left[ (Y_\alpha - \mu_{\tau})^+\right] = \frac{\sqrt{\alpha} \, \Gamma((\alpha+1)/2)}{\sqrt{\pi} (\alpha-1) \Gamma(\alpha/2)}
            \, \left( 1+\frac{\mu_{\tau}^2}{\alpha} \right)^{\frac{1-\alpha}{2}} - \mu_{\tau} \left(1-F_\alpha(\mu_{\tau})\right),
\end{align*}
where $F_\alpha(\cdot)$ denotes the distribution function of $t_\alpha$.
The limiting behavior of the empirical $\tau$-expectile then follows directly from  Corollary \ref{theorem-stable}.
Figure \ref{plots-stable-convergence} shows the distribution function of $n^{1-1/\alpha} \, \tilde{c} \, \left( \hat \mu_{\tau,n}-\mu_{\tau}\right)$
(more precisely the empirical distribution function based on 10000 replications) for sample sizes of 20, 200 and 2000 for several values of $\tau$ and $\alpha$.
It can be observed that the quality of the approximation by the corresponding limiting stable law depends on both $\tau$ and $\alpha$:
the approximation improves for decreasing $\alpha$ (see Figure \ref{plots-stable-convergence} (a)-(c)) and for $\tau$ approaching the value 0.5 (see Figure \ref{plots-stable-convergence} (d)-(f)).

\begin{figure}
\subfigure[$\tau=0.8, \ \alpha=1.2$]{\includegraphics[width=0.33\textwidth]{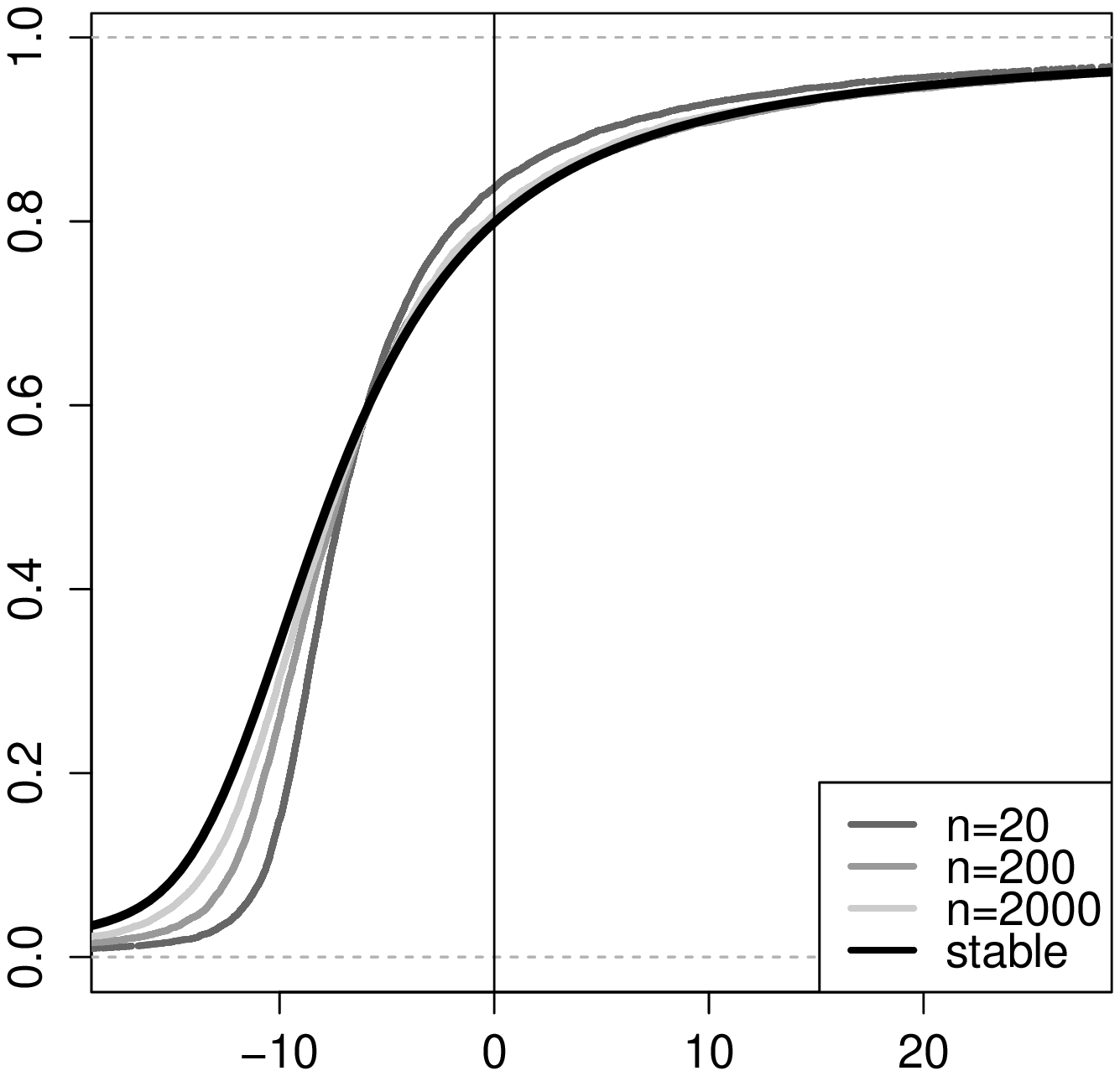}}\hfill
\subfigure[$\tau=0.8, \ \alpha=1.5$]{\includegraphics[width=0.33\textwidth]{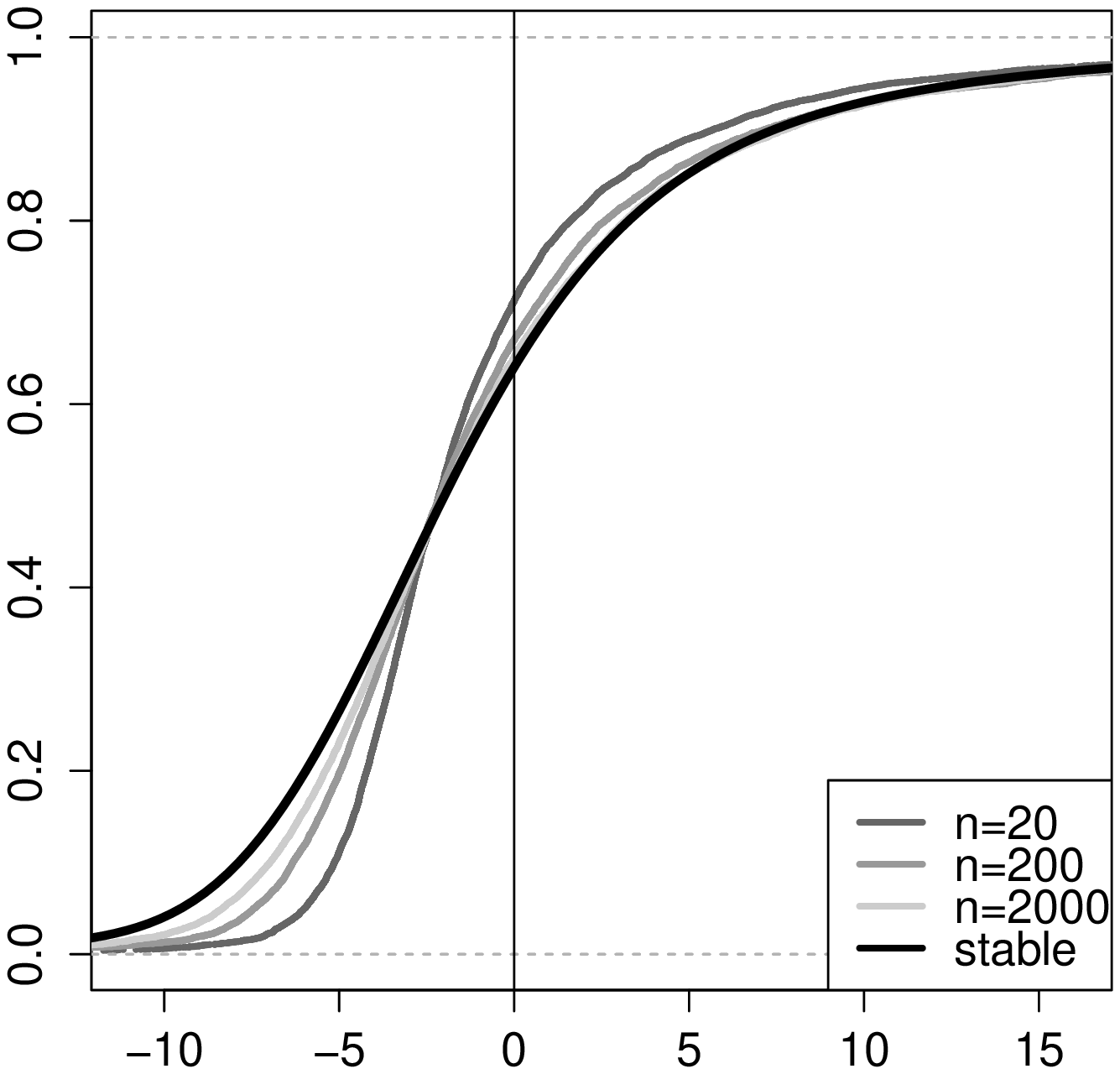}}\hfill
\subfigure[$\tau=0.8, \ \alpha=1.8$]{\includegraphics[width=0.33\textwidth]{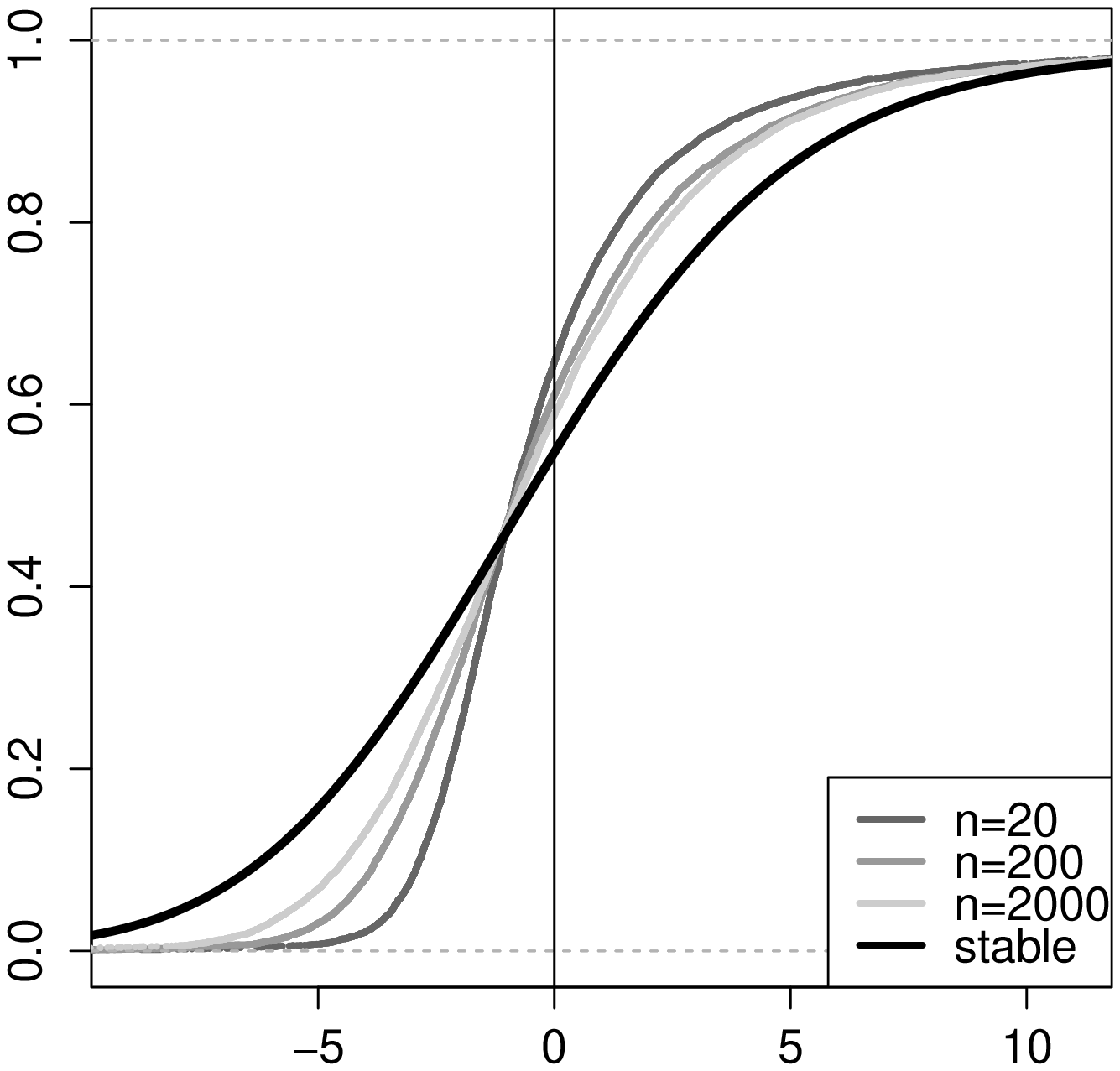}}\hfill
\subfigure[$\tau=0.6, \ \alpha=1.5$]{\includegraphics[width=0.33\textwidth]{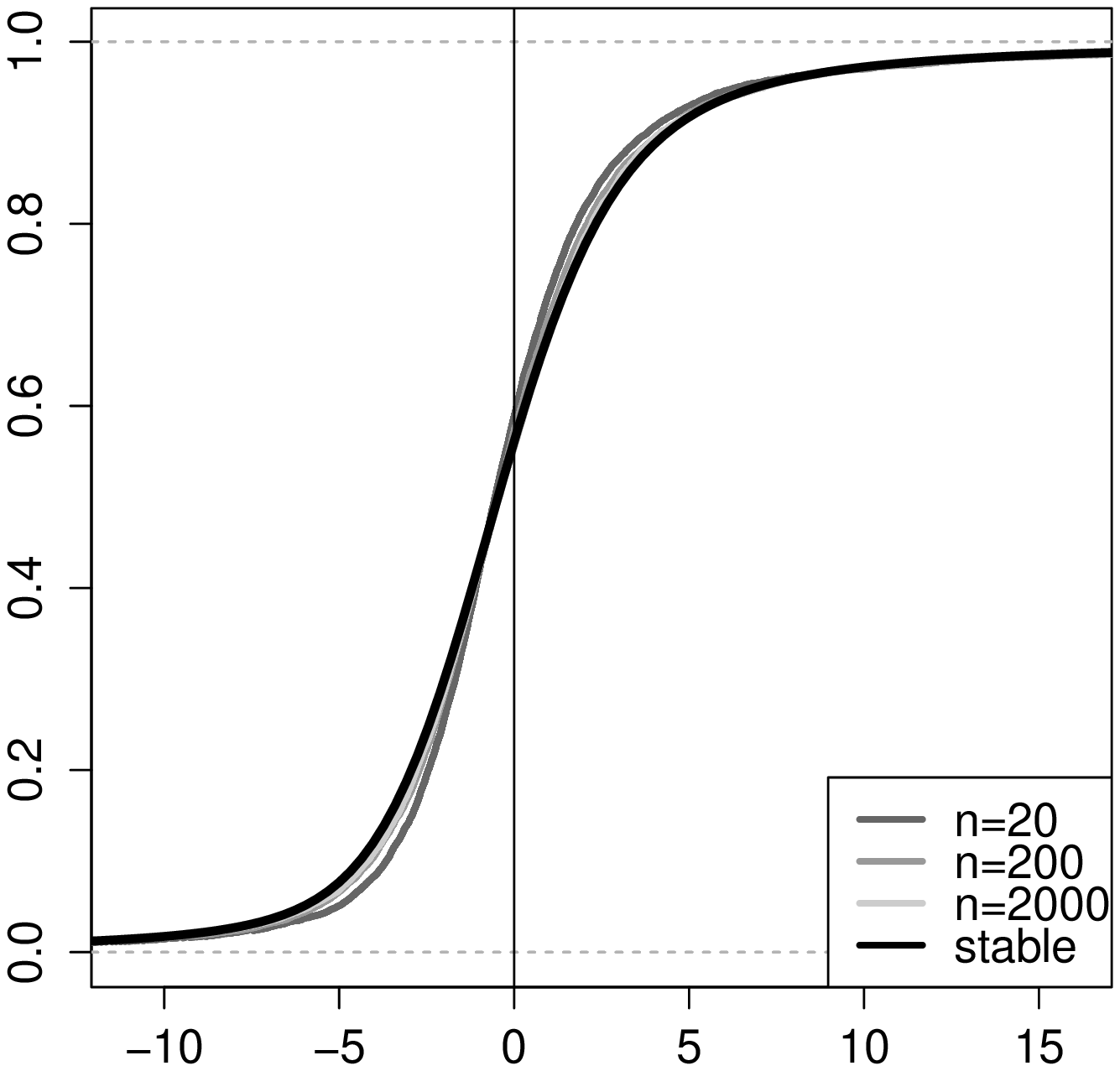}}\hfill
\subfigure[$\tau=0.9, \ \alpha=1.5$]{\includegraphics[width=0.33\textwidth]{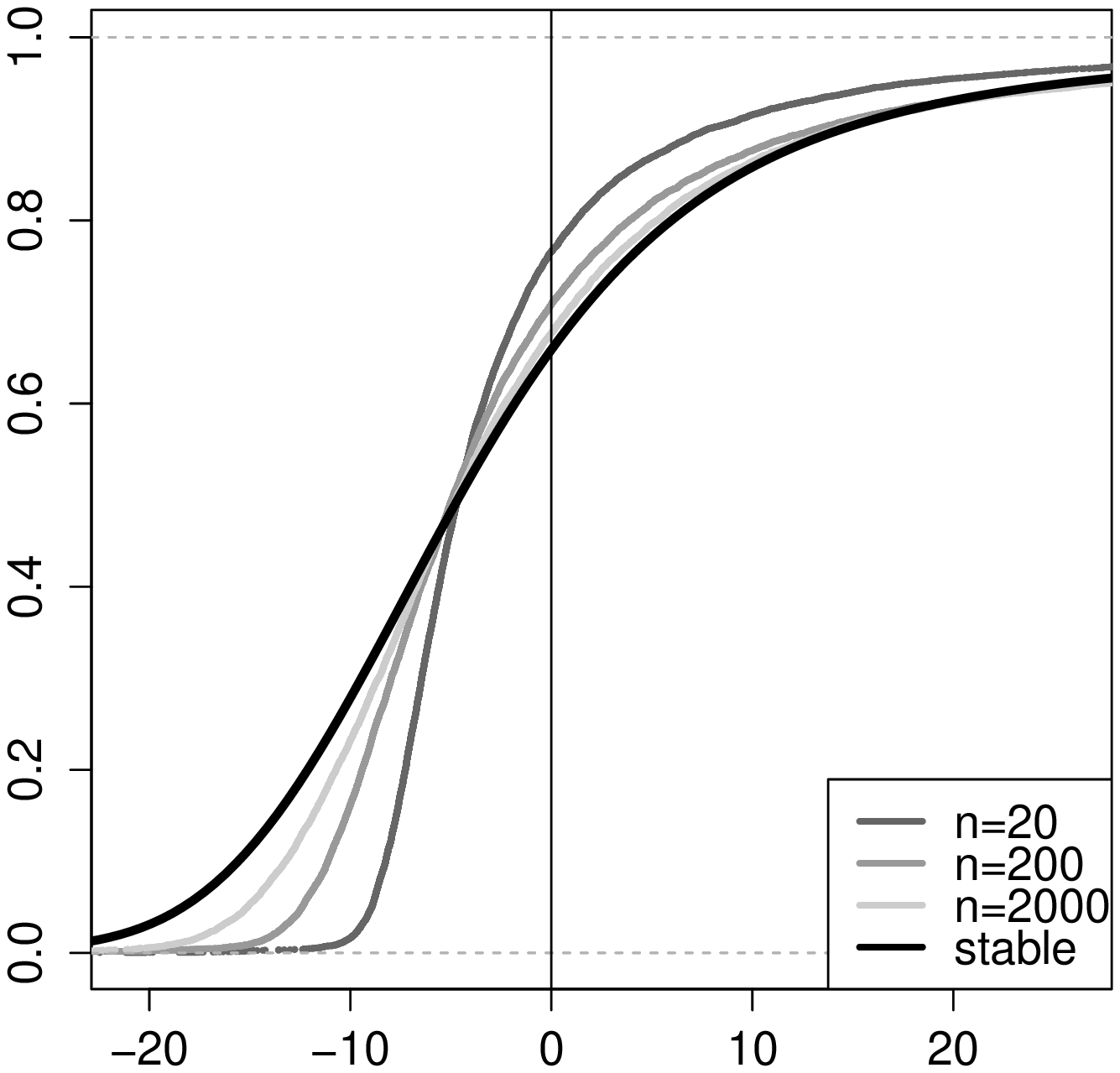}}\hfill
\subfigure[$\tau=0.95,\ \alpha=1.5$]{\includegraphics[width=0.33\textwidth]{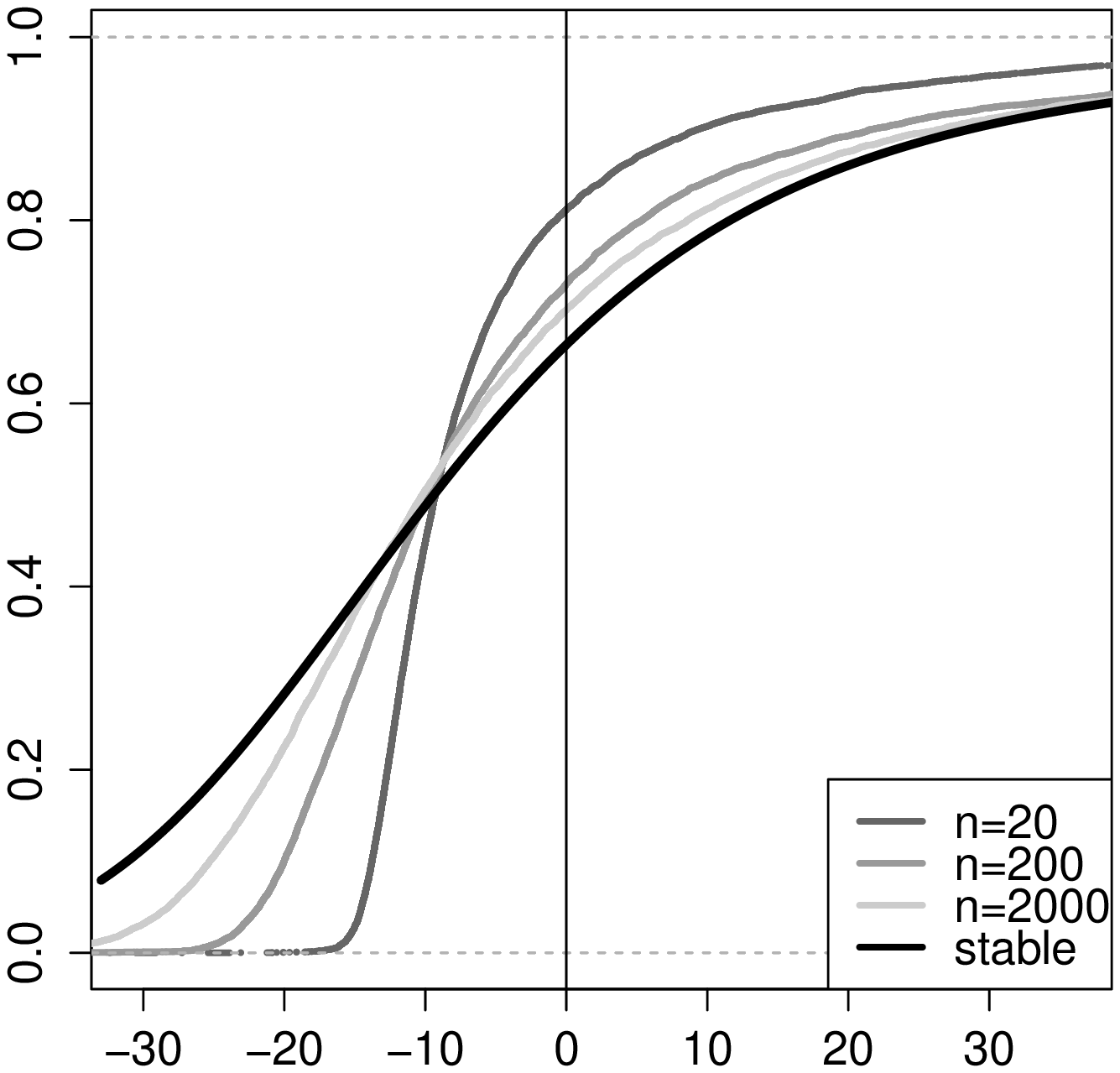}}\hfill
\caption{Convergence of the cumulative distribution function (cdf) of the empirical expectile to the corresponding limiting stable cdf. \\
Upper row: Data follow $t_{\alpha}$-distribution with different $\alpha$, $\tau=0.8$ fixed. \\
Lower row: Data follow $t_{\alpha}$-distribution with $\alpha=1.5$, different values of $\tau$.}
\label{plots-stable-convergence}
\end{figure}

\subsection{Illustration of nonstandard asymptotics under finite second moments} \label{sec32}

To illustrate the convergence to a non-normal distribution stated in Theorem \ref{them:asymdirtsecondmom}, we first give an explicit formula for the
empirical expectile which is interesting in itself. From (\ref{exp1}), it follows directly  that the $\tau$-expectile satisfies the equivalent conditions
\begin{align}
 \tau &= \frac{E\left[ (Y-\mu_{\tau})^-\right]}{E\left[ |Y-\mu_{\tau}| \right]}, \label{exp2} \\
 \mu_{\tau} &= \frac{ (1-\tau) E\left[ Y 1_{\{Y\leq \mu_{\tau}\}} \right] + \tau E\left[ Y 1_{\{Y>\mu_{\tau}\}} \right] }
                    {(1-\tau) P\left(Y \leq\mu_{\tau}\right) + \tau P\left(Y>\mu_{\tau}\right)} \ . \label{exp3}
\end{align}
The subsequent representation follows \citet{bellini2}, but formulated for the empirical distribution, and allowing for ties.
Let $Y_{(1)}\leq\ldots\leq Y_{(n)}$ denote the order statistics of $Y_1,\ldots,Y_n$. From (\ref{exp3}), the empirical expectile satisfies
\begin{align*} %\label{emp-exp1}
 \hat{\mu}_{\tau,n} &=
 \frac{ (1-\tau) \sum_k Y_{(k)} 1_{ \{Y_{(k)}\leq \hat\mu_{\tau,n}\} }  + \tau \sum_k Y_{(k)} 1_{ \{Y_{(k)}>\hat\mu_{\tau,n}\}} }
      { (1-\tau) \sum_k 1_{ \{Y_{(k)}\leq \hat\mu_{\tau,n}\} } + \tau \sum_k 1_{ \{Y_{(k)} > \hat\mu_{\tau,n}\}} } \ .
\end{align*}
Hence, for $\hat{\mu}_{\tau,n}\in [Y_{(i)},Y_{(i+1)})$, where $Y_{(i)}<Y_{(i+1)}$, one has
\begin{align} \label{emp-exp}
 \hat{\mu}_{\tau,n} &=
 \frac{ (1-\tau) \sum_{k=1}^i Y_{(k)} + \tau \sum_{k=i+1}^n Y_{(k)} }{ (1-\tau) i + \tau (n-i) } \ .
\end{align}
Defining
\begin{align} \label{tau*}
 \tau_i^* &:= \frac{ i Y_{(i)} - \sum_{k=1}^i Y_{(k)} }{ \sum_{k=1}^n |Y_{(k)}-Y_{(i)}| }, \quad i=1,\ldots,n,
\end{align}
we have $\hat{\mu}_{\tau,n}=Y_{(i)}$ iff $\tau=\tau_i^*$ for $i=1,\ldots,n$
(and then, (\ref{tau*}) is the empirical counterpart of (\ref{exp2})).
Note that $\tau_0^*=0, \tau_n^*=1$, and since $\hat{\mu}_{\tau,n}$ is nondecreasing in $\tau$,
we obtain that  $\tau_i^* \leq \tau_{i+1}^*, i=1,\ldots,n-1$. As a consequence,
\[
 \hat{\mu}_{\tau,n}\in [Y_{(i)},Y_{(i+1)}) \; \Leftrightarrow \; \tau \in [\tau_i^*,\tau_{i+1}^*), \quad i=1,\ldots,n-1.
\]

\begin{remark*}
\begin{enumerate}
\item
Formulas (\ref{tau*}) and (\ref{emp-exp}) are especially well-suited for plotting purposes without
the need of any numerical root-finding.
\item
From (\ref{emp-exp}), $\hat{\mu}_{\tau,n}$ is piecewise differentiable in $\tau$ with
\begin{align*}
 \frac{d \hat{\mu}_{\tau,n}}{d\tau} &=
 \frac{ i \sum_{k=i+1}^n Y_{(k)} - (n-i) \sum_{k=1}^i Y_{(k)} }{ ((1-\tau) i + \tau (n-i))^2 }
 \quad \mbox{for } \tau \in (\tau_i^*,\tau_{i+1}^*).
\end{align*}
\end{enumerate}
\end{remark*}

\vspace{3mm}
If $Y$ has a discrete distribution on $0,1,2,\ldots$ (say), an analogous reasoning leads to the following explicit formula for the theoretical expectiles $\mu_{\tau}$. Define
\begin{align} \label{discrete1}
 \tau_i^* &:= \frac{ \sum_{k=0}^{i-1} (i-k)P(Y=k) }{ \sum_{k\geq 0} |i-k| P(Y=k)}, \quad i=0,1,2,\ldots.
\end{align}
For $\tau \in [\tau_i^*,\tau_{i+1}^*)$,  and accordingly $\mu_{\tau}\in [i,i+1)$, one has
\begin{align} \label{discrete2}
 \mu_{\tau} &= \frac{ (1-\tau) \sum_{k\leq i} kP(Y=k) + \tau \sum_{k > i} kP(Y=k) }
                    { (1-\tau) P(Y \leq i) + \tau P(Y>i) } \ .
\end{align}
Now, assume that $Y$ follows a three-point distribution with $P(Y=i)=p_i, \, i=0,1,2$, with $p_0,p_1,p_2>0, p_0+p_1+p_2=1$.
Then, from (\ref{discrete1}) and (\ref{discrete2}), we get $\tau_0^*=0, \tau_1^*=p_0/(p_0+p_2), \tau_2^*=1$ and
\begin{align*}
\mu_\tau &= \begin{cases}
             \frac{\tau(p_1+2p_2)}{(1-\tau)p_0+\tau(p_1+p_2)}, & 0<\tau < \tau_1^*, \\
             \frac{(1-\tau)p_1+2\tau p_2}{(1-\tau)(p_0+p_1)+\tau p_2},     & \tau_1^* \leq \tau <1.
        \end{cases}
\end{align*}
Next, we make the choice $p_0=4/10, p_1=5/10, p_2=1/10$. Then, $\mu_{0.8}=1$, i.e. the distribution of $Y$ has a point mass in $\mu_{\tau}$ for $\tau=0.8$, but not for other values of $\tau$. Figure \ref{plots-non-normal-convergence} (a) shows the density of $ \sqrt{n} \left(\hat \mu_{\tau,n} -  \mu_{\tau} \right)$
(estimated by a nonparametric density estimator based on 20000 replications) for sample size 500 and $\tau=0.7$ (hence, $\mu_{\tau}=49/54$) together with the limiting normal distribution given in Corollary \ref{cor:asympnorm}. Figure \ref{plots-non-normal-convergence} (b) shows the corresponding plot for $\tau=0.8$ together with the limiting non-normal distribution given in Corollary \ref{them:asymdirtsecondmom}.

\begin{figure}
\subfigure[$\tau=0.7, \ \mu_{\tau}=0.907$]{\includegraphics[width=0.5\textwidth]{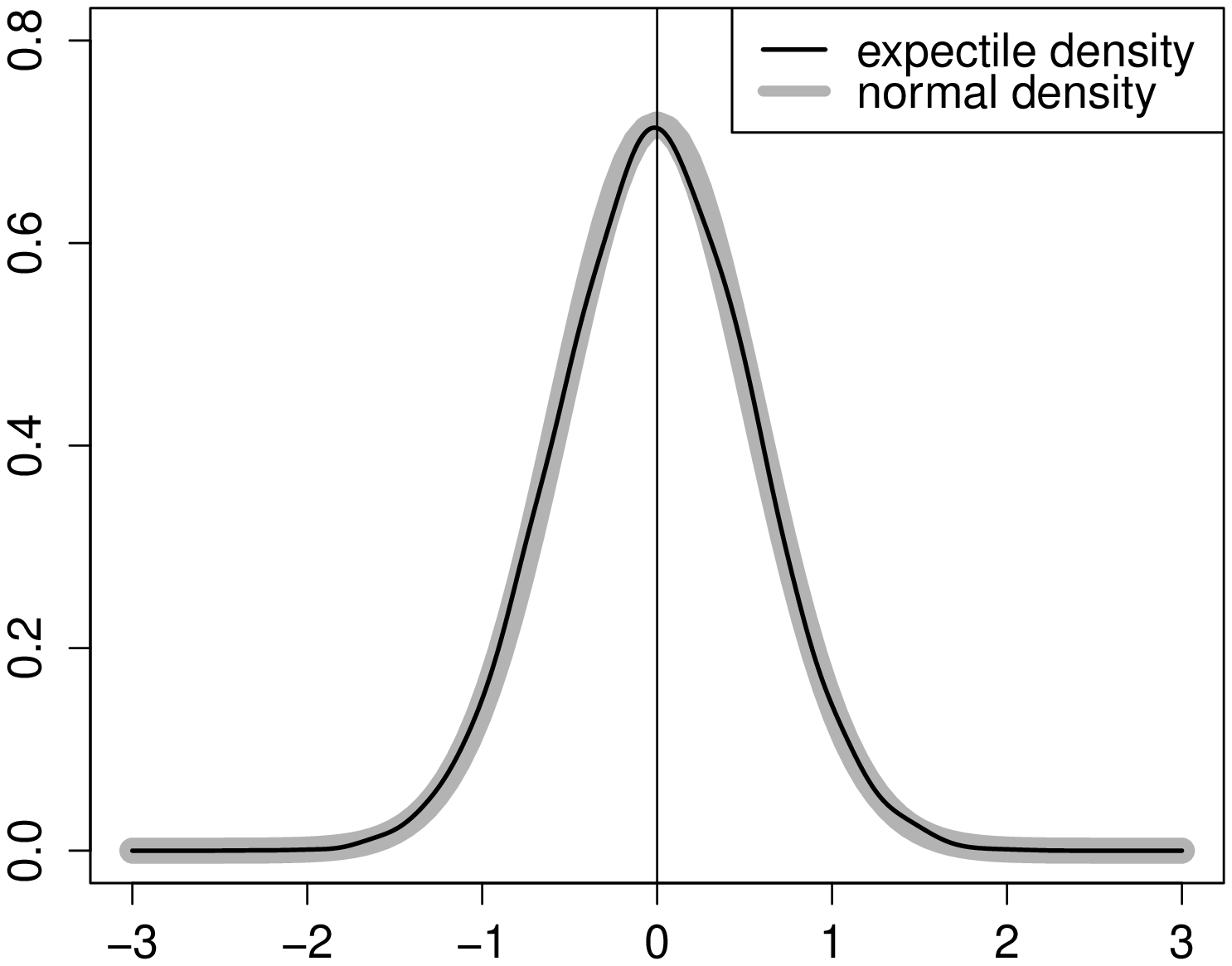}}\hfill
\subfigure[$\tau=0.8, \  \ \mu_{\tau}=1$]{\includegraphics[width=0.5\textwidth]{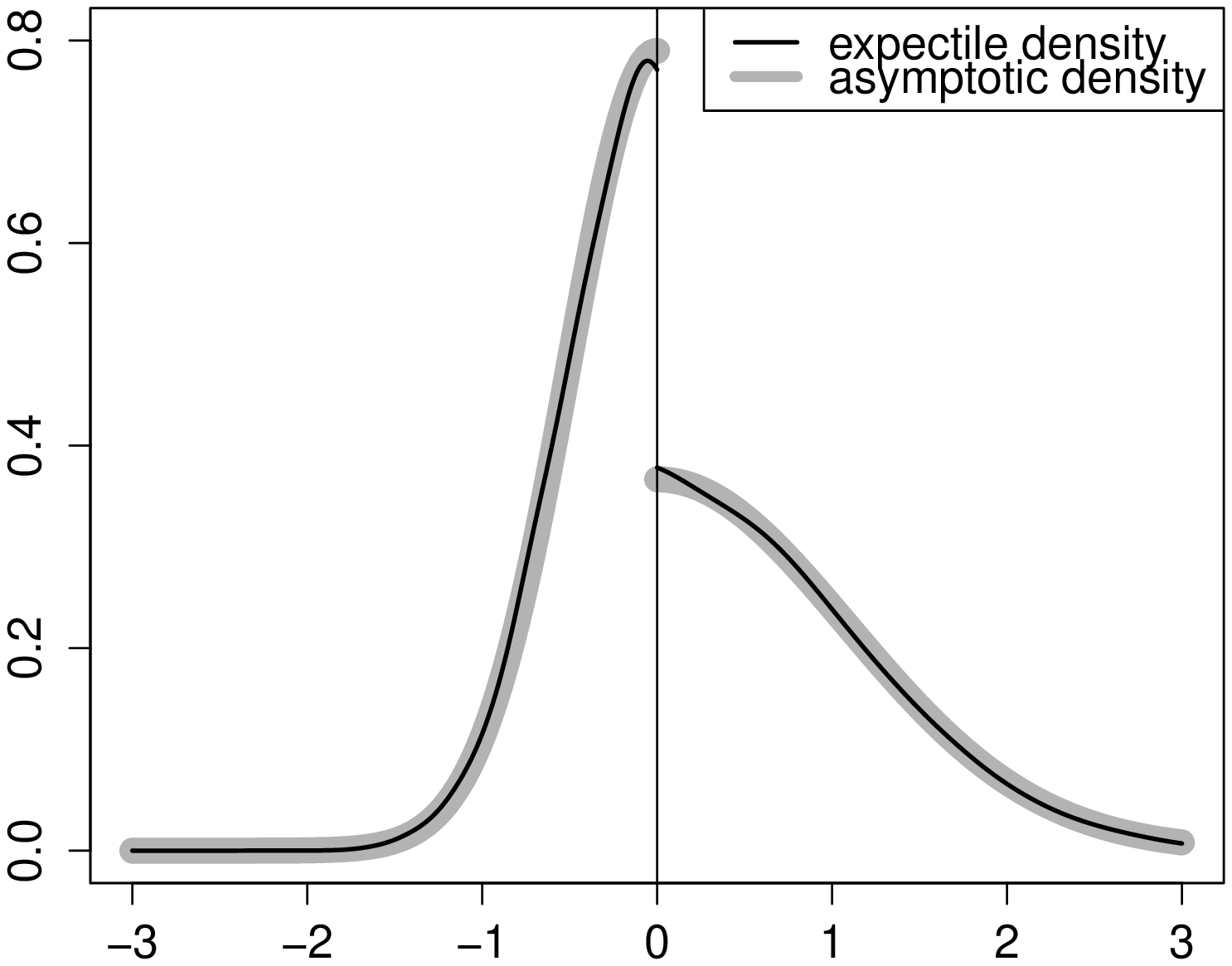}}\hfill
\caption{Density function of the standardized empirical expectile for $n=500$ and of the corresponding limiting distribution.
Data follow a three point distribution in 0,1,2.  \\
(a) $\tau=0.7$, normal limiting distribution. (b) $\tau=0.8$, non-normal limiting distribution. }
\label{plots-non-normal-convergence}
\end{figure}
%
%%%%%%%%%%%%%%%%%%%%%%%%%%%%%%%%%%%%%%%%%%%%%%
%%%%%%%%%%%%%%%%%         Proofs                      %%%%%%%%%%%%%%%%
%%%%%%%%%%%%%%%%%%%%%%%%%%%%%%%%%%%%%%%%%%%%%%
%
\section{Proofs}\label{sec:proofs}
\begin{proof}[{\sl Proof of Proposition \ref{lem:continuity}}]
Parts (i) and (ii) are from \citet{newey} except for the general continuity of $\mu_\tau(F)$ in $\tau$. From (\ref{eq:partialint}) we see that
$I_\tau(x,F)$ is a continuous function of $(\tau,x)$. To show continuity of the expectile, first let $\tau_n \downarrow \tau$, and let $\tilde \mu_\tau = \lim_n \mu_{\tau_n} (F)$ for which by monotonicity $\mu_{\tau} (F) \leq \tilde \mu_\tau$. By continuity of  $I_\tau(x,F)$ we have
\[ 0 = \lim_n I_{\tau_n}\big(\mu_{\tau_n} (F),F\big) = I_{\tau}\big(\tilde \mu_{\tau} ,F\big),\]
but since $\mu_{\tau} (F) $ is the unique zero, it follows that $\mu_{\tau} (F) = \tilde \mu_\tau$, that is, right-continuity. The argument for left-continuity is the same. \\
(iii) \quad From (\ref{eq:partialint}) we see that if $F$ is continuous in a neighborhood of $x$, then $I_\tau(\cdot,F)$ is continuously differentiable at $x$ with derivative
$-\tau \big(1 - F(x) \big) - (1-\tau)\, F(x)$. The conclusion follows from the implicit function theorem.
\end{proof}
\begin{proof}[{\sl Proof of Theorem \ref{prop:uniformconsistency}}]
We start with strong consistency of individual expectiles, that is,
\begin{align}\label{eq:consistindividual}
  \hat{\mu}_{\tau,n} \xrightarrow{} \mu_{\tau}(F)\quad a.s.
\end{align}
We may use the representation (\ref{Z-estimator}) of the empirical expectile as a Z-estimator and strengthen \citet[Lemma 5.10]{vaart} to almost sure convergence. Since $x \mapsto I_\tau(x,F)$ is strictly decreasing, we have for every $\varepsilon>0$ that
\[ I_\tau(\mu_{\tau}-\varepsilon, F) > 0 > I_\tau(\mu_{\tau}+\varepsilon,F).\]
Since $I(\mu_{\tau} \pm \varepsilon, \hat F_n) \to I_\tau(\mu_{\tau} \pm \varepsilon,F)$ a.s.~as $n \to \infty$, we have a.s.~that $I(\mu_{\tau}-\varepsilon, \hat F_n)>0>I(\mu_{\tau}+\varepsilon, \hat F_n)$ for large $n \in \N$. Since each map
$x \to I(x, \hat F_n), \ n\in\N,$ is continuous and has exactly one zero $\hat \mu_{\tau,n}$, this zero must a.s.~lie between $\mu_{\tau} \pm \varepsilon$ for large $n \in \N$, that is,
\[ \limsup_{n \to \infty} \big|\hat{\mu}_{\tau,n} - \mu_{\tau}(F) \big| \leq \varepsilon \qquad a.s.\quad \forall \ \varepsilon>0,\]
showing (\ref{eq:consistindividual}).   \\
Using Proposition \ref{lem:continuity} (ii) and individual consistency, the classical Glivenco-Cantelli argument may be applied. Let $d = \mu_{\tau_u}(F) - \mu_{\tau_l}(F)$, $m \in \N$, and choose by continuity
$\tau_l = \tau_0 \leq \tau_1 \leq \ldots \leq \tau_m = \tau_u$ such that $\mu_{\tau_k(F)} =  \mu_{\tau_l}(F) + kd/m$, $k=1, \ldots, m$.
By monotonicity, for $\tau_k \leq \tau \leq \tau_{k+1}$,
\[ \hat{\mu}_{\tau,n} - \mu_\tau(F) \leq \hat{\mu}_{\tau_{k+1},n} - \mu_{\tau_{k+1}}(F) + \mu_{\tau_{k+1}}(F) - \mu_{\tau_{k}}(F).\]
Therefore
\[ \sup_{\tau_l \leq \tau \leq \tau_u}\, \big( \hat{\mu}_{\tau,n} - \mu_{\tau}(F)\big)  \leq \max_{0 \leq k \leq m} \big| \hat{\mu}_{\tau_{k},n} - \mu_{\tau_{k}}(F) \big| + d/m.\]
Similarly,
\[ \sup_{\tau_l \leq \tau \leq \tau_u}\, \big(\mu_{\tau}(F) - \hat{\mu}_{\tau,n}\big)  \leq \max_{0 \leq k \leq m} \big| \hat{\mu}_{\tau_{k},n} - \mu_{\tau_{k}}(F) \big| + d/m.\]
Since
\[ \sup_{\tau_l \leq \tau \leq \tau_u}\, \big| \hat{\mu}_{\tau,n} - \mu_{\tau}(F)\big| = \max\Big(\sup_{\tau_l \leq \tau \leq \tau_u}\, \big( \hat{\mu}_{\tau,n} - \mu_{\tau}(F)\big) ,  \sup_{\tau_l \leq \tau \leq \tau_u}\, \big(\mu_{\tau}(F) - \hat{\mu}_{\tau,n}\big) \Big),\]
we have for any $m \in \N$ that
\begin{align*}
\limsup_n \sup_{\tau_l \leq \tau \leq \tau_u}\, \big| \hat{\mu}_{\tau,n} - \mu_{\tau}(F)\big| & \leq \limsup_n \max_{0 \leq k \leq m} \big| \hat{\mu}_{\tau_{k},n} - \mu_{\tau_{k}}(F) \big| + d/m = d/m \quad a.s.
\end{align*}
\end{proof}

We shall derive Theorem \ref{the:asymplinearization} from Theorems 1 and 10 in \citet{arcones}. For convenience, we state a version of these results,  tailored to our needs.

{\bf Theorem} [{\sl Theorems 1 and 10 in \citet{arcones} }] Let $Y,Y_1,Y_2,\ldots$ be i.i.d.~with distribution function $F$.
Let $g:\R^2\to\R$ be a function such that $g(\cdot,\theta): \R\to\R$ is measurable for each $\theta\in\R$.
Let $\hat{\theta}_n$ be a sequence of r.v.'s satisfying
\[
 n^{-1} \sum_{k=1}^n g(Y_k,\hat{\theta}_n) = \inf_{\theta\in\R} n^{-1} \sum_{k=1}^n g(Y_k,\theta).
\]
Suppose that:
\begin{itemize}
\item[(A.1)]
$\hat{\theta}_n \xrightarrow{\prob} \theta_0, \quad \theta_0\in\R.$
\item[(A.2)]
There is a positive constant $V$ such that
\begin{align*}
 E[g(Y,\theta)-g(Y,\theta_0)]= V(\theta-\theta_0)^2 + o(|\theta-\theta_0|^2),
\end{align*}
as $\theta \to\theta_0$.
\item[(A.3)]
Let $\varphi:\R\to\R$ and let $\{a_n\}$ be a sequence of positive numbers which converges to infinity with
$\sup_{n\geq 1} n^{-1} a_n^2 < \infty$ such that
\begin{align*}
a_n \left( n^{-1} \sum_{j=1}^n \varphi(Y_j) - E[\varphi(Y)] \right) =O_{\prob}(1).
\end{align*}
\item[(A.4)]
There is a function $\zeta:\R\to\R$ with $ E|\zeta(Y)|<\infty$ such that
\begin{align*}
 \lim_{\delta \to 0} E\left[ \sup_{|\vartheta|\leq\delta} \frac{|r(Y,\vartheta)-\vartheta^2 \, \zeta(Y)|}{\vartheta^2} \right] = 0,
\end{align*}
where $r(y,\vartheta)= g(y,\vartheta_0+\vartheta) - g(y,\vartheta_0) - \vartheta \varphi(y)$.
\end{itemize}
Then,
\begin{align}
 a_n(\hat{\theta}_n-\theta_0) + \frac{a_n}{2V} \left( \frac{1}{n} \sum_{j=1}^n \varphi(Y_j) - E[\varphi(Y)] \right) &\xrightarrow{\prob} 0.
\end{align}
\begin{proof}[{\sl Proof of Theorem \ref{the:asymplinearization}}]
We verify the conditions of the above theorem for $\theta_0 = \mu_\tau(F)$, $g(y,\theta) = S_\tau(\theta,y)$, $\varphi(y) = - I_\tau\big(\mu_\tau(F),y\big)$ and
\[ \zeta (y) = \frac{\tau}{2}\, 1_{\{y > \mu_\tau\}} + \frac{1-\tau}{2}\, 1_{\{y < \mu_\tau\}} \, . \]
(A1) follows from Theorem \ref{prop:uniformconsistency}. \\
(A2) follows from (\ref{eq:asympcontrast}), (\ref{eq:derivatives}), the assumption of continuity of $F$ at $\mu_\tau(F)$, and Taylor's theorem, which holds under the minimal assumption of an existing second derivative. \\
(A3) is (\ref{eq:boundedness}).\\
Finally, for (A4) we compute that for $x>0$,
%
%\begin{align*}
\begin{eqnarray*}
\lefteqn{ S_\tau(\mu_{\tau}+x,y) - S_\tau(\mu_{\tau},y) + x I_\tau(\mu_{\tau},y) } \\
& = & \ - \frac{\tau}{2}\, (y - \mu_\tau - x)^2\, 1_{\{\mu_\tau < y  \leq \mu_\tau+x\}} + \frac{\tau}{2}\, x^2 1_{\{y > \mu_\tau\}}\\
&& \ + \ \frac{1-\tau}{2}\, (y - \mu_\tau - x)^2\, 1_{\{\mu_\tau \leq y  < \mu_\tau+x\}} + \frac{1-\tau}{2}\, x^2 1_{\{y < \mu_\tau\}}
\end{eqnarray*}
%\end{align*}
%
and similarly for $x < 0$. Therefore for some $c>0$ we may estimate
	\begin{align*}
\big|S_\tau(\mu_{\tau}+x,y) - S(\mu_{\tau},y) + x I_\tau(\mu_{\tau},y) - x^2 \zeta(y) \big|  %
\leq & c\, (y - \mu_\tau - x)^2\, 1_{\{\mu_\tau - |x| \leq y  \leq  \mu_\tau+|x|\}}\\
\leq &c\, x^2 \, 1_{\{\mu_\tau - |x| \leq y  \leq  \mu_\tau+|x|\}} \, ,
\end{align*}
and therefore
\begin{align*} %\label{A4}
& E\left[ \sup_{|x|\leq\delta}
 \frac{| S_\tau (\mu_{\tau}+x,Y) - S_\tau (\mu_{\tau},Y) + x I_\tau (\mu_{\tau},Y) - x^2 \zeta(Y) |}{x^2} \right]\\
\leq & c\, \prob\big(\mu_\tau - \delta \leq Y \leq \mu_\tau + \delta \big) \to 0,\quad \delta \to 0,
\end{align*}
since $Y$ does not have a point mass at $\mu_\tau$.
\end{proof}
\begin{proof}[{\sl Proof of Theorem \ref{them:asymdirtsecondmom}}]
We start by establishing Lipschitz continuity of $S_\tau(x,y)$ as a function of $x$ with square-integrable Lipschitz constant.
Since
$ \partial_x S_\tau(x,y) =  - I_\tau(x,y),$
we have for $x_1, x_2 \in B_\delta(\mu_\tau)$
\begin{align}\label{eq:Lipschitzprop}
 \left| S_\tau(x_1,y) - S_\tau(x_2,y) \right| \leq c \, m(y) |x_1 - x_2|, \qquad m(y):= \sup_{x \in B_\delta(\mu_\tau)} \big|I_\tau(x,y)\big|.
\end{align}
Then, the inequality

\begin{align*}
 m(y) \leq \sup_{x \in B_\delta(\mu_\tau)} |x-y| \leq \sup_{x \in B_\delta(\mu_\tau)} |x| + |y|
\end{align*}
yields $ E[m(Y)^2] < \infty$ if $EY^2<\infty$, that is, the Lipschitz constant has finite second moment.

Next, the asymptotic contrast in (\ref{eq:asympcontrast}) is continuously differentiable with left and right derivatives in $\mu_\tau$ given in (\ref{eq:derivatives}). From Taylors formula, we obtain
\begin{align}\label{eq:taylortwosided}
\begin{split}
\psi_\tau(x)-\psi_\tau(\mu_{\tau})& =(x-\mu_{\tau})^2 \psi^{''+}_\tau(\mu_{\tau})/2 + o(|x-\mu_{\tau}|^2), \qquad x > \mu_{\tau},\\
\psi_\tau(x)-\psi_\tau(\mu_{\tau})& =(x-\mu_{\tau})^2 \psi^{''-}_\tau(\mu_{\tau})/2 + o(|x-\mu_{\tau}|^2), \qquad x < \mu_{\tau},
\end{split}
\end{align}
where $\psi^{''\pm}_\tau(\mu_{\tau})$ are right/left second derivatives.

Therefore, the assumptions of Theorem 5.52 in \citet{vaart} are satisfied with $\alpha=2$ and $\beta = 1$ (see the argument in Corollary 5.53, that the Lipschitz property (\ref{eq:Lipschitzprop}) implies the concentration inequality), and we obtain the $\sqrt{n}$-rate of convergence:
\[ \sqrt{n} \left(\hat \mu_{\tau,n} -  \mu_{\tau} \right) = O_{\prob} (1).\]
To obtain the asymptotic distribution, we apply the argmax-continuity theorem, Corollary 5.58 in \citet{vaart}. To this end, for a measurable function $f$ with $E f^2(Y)< \infty$, denote
\[ \mathbb{P}_n f = \frac1n \sum_{k=1}^n f(Y_k),\quad P f = E F(Y), \quad \text{and}\quad \mathbb{G}_n(f) = \sqrt{n}\, \big(\mathbb{P}_n  - P \big) f.\]
By the Lipschitz property (\ref{eq:Lipschitzprop}), from the proof of Lemma 19.31 in \citet{vaart} we obtain
for any $M>0$ that
\[ \sup_{|h| \leq M} \mathbb{G}_n\big[\sqrt{n}\,\big(S_\tau(\mu_\tau + h/\sqrt{n},\cdot ) - S_\tau(\mu_\tau,\cdot) \big) + h\, I_\tau(\mu_\tau,\cdot) \big] \stackrel{n \to \infty}{\to} 0 \quad (\prob).\]
Therefore, for any $M>0$, the difference between the processes
\begin{align*}
h \mapsto \sqrt{n}\, \mathbb{P}_n \big[\sqrt{n}\,\big(S_\tau(\mu_\tau + h/\sqrt{n},\cdot ) - S_\tau(\mu_\tau,\cdot) \big) \big], \qquad |h| \leq M,
\end{align*}
and
\begin{align*}
h \mapsto n\, \big[\psi_\tau\big(\mu_\tau + h/\sqrt{n} \big) - \psi_\tau(\mu_\tau) \big) \big] - h\, \mathbb{G}_n I_\tau(\mu_\tau,\cdot), \qquad |h| \leq M,
\end{align*}
tends to $0$ in probability in sup-norm. Using (\ref{eq:taylortwosided}), the second process
converges to the Gaussian process
\begin{align}\label{eq:limitprocess}
h \mapsto \frac{1}{2 \sigma_1}\, h^2 1_{h>0} + \frac{1}{2 \sigma_2}\,\, h^2 1_{h<0}\, - h\, W ,
\end{align}
where $W$ is normally distributed as in the theorem, hence so does the first. From the argmax - continuity theorem, we obtain weak convergence of the minimizers $\sqrt{n} \left(\hat \mu_{\tau,n} -  \mu_{\tau} \right)$ to the minimizer of the limit process.
Now, a parabola $h \mapsto - h W + h^2 / (2 \sigma)$ for some $\sigma>0$ is minimized at $h = \sigma W$, yielding the negative value $- \sigma W^2/2$. Therefore, the minimizer of (\ref{eq:limitprocess}) is at $h = \sigma_1 W$ for $W>0$ and at $h =  \sigma_2 W$ for $W<0$, which gives the statement of the theorem.
\end{proof}
\begin{proof}[{\sl Proof of Theorem \ref{satz:cltexpec}}]
We shall apply \citet[Theorem 1]{vaart95}, which gives asymptotic normality of functional Z-estimators; see also \citet[Theorem 13.4]{kosorok}, which additionally implies validity of the bootstrap. First, Theorem \ref{prop:uniformconsistency} gives the uniform consistency. Given $\nu \in C\big[\tau_l, \tau_u \big] \subset l^\infty\big[\tau_l, \tau_u \big]$, the functions $\tau \mapsto I_\tau\big(\nu(\tau),F \big)$ and $\tau \mapsto I_\tau\big(\nu(\tau),\hat F_n \big)$ are also in $C\big[\tau_l, \tau_u \big]$, and
\[ \tau \mapsto I_\tau\big(\mu_{\tau},F \big) = 0,\qquad \tau \mapsto I_\tau\big(\hat \mu_{\tau,n},\hat F_n \big)=0. \]
Next, we check the conditions (2), (3) and (4) in \citet{vaart95}. Suppose that $\nu \in C\big[\tau_l, \tau_u \big] $ is such that $F$ is continuous on the image of $\nu$ (this is true by our assumption if $\| \nu - \mu\|_{[\tau_l, \tau_u]}$ is small enough). Then we apply the mean value theorem for each $\tau \in [\tau_l, \tau_u]$ to obtain
\begin{align*}
&  \big|I_\tau\big(\nu(\tau),F \big) - I_\tau\big(\mu_\tau,F \big) + \big[\tau \big(1 - F(\mu_\tau) \big) + (1-\tau)\, F(\mu_\tau)\big] \big(\nu(\tau) - \mu_\tau \big)\big|\\
\leq & \big|F(\xi_\tau) - F(\mu_\tau) \big|\, \big|\nu(\tau) - \mu_\tau \big|,
\end{align*}
where $\xi_{\tau}$ is between $\nu(\tau)$ and $\mu_\tau$. Since $F$ is uniformly continuous in a compact neighborhood of $[\tau_l, \tau_u]$, we obtain
\begin{equation}\label{eq:frechet}
\sup_{\tau \in [\tau_l, \tau_u]} \big|I_\tau\big(\nu(\tau),F \big) - I_\tau\big(\mu_\tau,F \big) + \big[\tau \big(1 - F(\mu_\tau) \big) + (1-\tau)\, F(\mu_\tau)\big] \big(\nu(\tau) - \mu_\tau \big)\big|=o\big(\| \nu - \mu\|_{[\tau_l, \tau_u]}\big),
\end{equation}
showing Fr\'echet differentiability, that is, (4) in \citet{vaart95}. Note that the derivative, multiplication with the function $\tau \mapsto - \big[\tau \big(1 - F(\mu_\tau) \big) + (1-\tau)\, F(\mu_\tau)\big]$ is continuously invertible.\\
Since by Proposition \ref{lem:continuity}, (iii), $\mu_\tau(F)$ is continuously differentiable in $\tau$, we have for an appropriate constant $c>0$ that
\begin{equation}\label{eq:lipschitzcondition}
 \big| I_{\tau_1}\big(\mu_{\tau_1}(F), y\big) - I_{\tau_2}\big(\mu_{\tau_2}(F), y\big) \big| \leq c |y|\, |\tau_1 - \tau_2|,\quad \tau_1, \tau_2 \in [\tau_l, \tau_u],\quad y \in \R,
\end{equation}
so that
\begin{equation}\label{eq:functionsDonsker}
 \big(I_\tau\big(\mu_\tau(F), \cdot  \big) \big)_{\tau \in [\tau_l, \tau_u]}
\end{equation}
is a Donsker class of functions, see \citet[Example 19.7.]{vaart}, taking care of (2) in \citet{vaart95}.

Finally, to show (3) in \citet{vaart95}, we choose $\delta_n \downarrow 0$, and estimate in the first step
\begin{align}\label{eq:maximalprocess}
\begin{split}
&\sup_{\| \nu - \mu\|_{[\tau_l, \tau_u]} \leq \delta_n}\, \sup_{\tau \in [\tau_l, \tau_u]}\, \sqrt{n}\, \big|I_\tau\big(\nu(\tau), \hat F_n \big) - I_\tau\big(\nu(\tau),  F \big) - \big[ I_\tau\big(\mu_\tau, \hat F_n \big) - I_\tau\big(\mu_\tau,  F \big) \big] \big|\\
\leq & \sup_{|x| \leq \delta_n}\, \sup_{\tau \in [\tau_l, \tau_u]}\, \sqrt{n}\, \big|I_\tau\big(\mu_\tau + x, \hat F_n \big) - I_\tau\big(\mu_\tau+x,  F \big) - \big[ I_\tau\big(\mu_\tau, \hat F_n \big) - I_\tau\big(\mu_\tau,  F \big) \big] \big| \, .
\end{split}
\end{align}
Now, for a constant $C>0$, $y \in \R$, $\tau, \tau_1, \in [\tau_l, \tau_u]$, $|x|, |x_1| \leq 1$,
\begin{align*}
&\big|I_{\tau_1}\big(\mu_{\tau_1} + x_1, y \big) - I_{\tau_1}\big(\mu_{\tau_1} , y \big) - \big[ I_{\tau}\big(\mu_{\tau} + x, y \big) - I_{\tau}\big(\mu_{\tau} , y \big) \big] \big| \leq C |y| \big(|x_1 - x| + |\tau_1 - \tau| \big),\\
& \big|I_{\tau}\big(\mu_{\tau} + x, y \big) - I_{\tau}\big(\mu_{\tau} , y \big) \leq C \, |y| \, |x|.
\end{align*}
Therefore,
\[ \sup_{|x| \leq \delta_n}\, \sup_{\tau \in [\tau_l, \tau_u]}\, E\big(I_{\tau}\big(\mu_{\tau} + x, Y \big) - I_{\tau}\big(\mu_{\tau} , Y \big) \big)^2 \leq C\, E Y^2\, \delta_n,\]
and each
\[ \cF_n = \Big\{(x,\tau) \mapsto I_{\tau}\big(\mu_{\tau} + x, y \big) - I_{\tau}\big(\mu_{\tau} , y \big),\quad |x| \leq \delta_n,\ \tau \in [\tau_l, \tau_u] \Big\}\]
is a Lipschitz-class of functions. Therefore we may estimate (\ref{eq:maximalprocess}) by the bracketing integral $J_{[]}\big(\delta_n, \cF_n, L_2(F)\big)$ and an additional sequence converging to zero, by \citet[Lemma 19.34 and Example 19.7]{vaart}, which together $\to 0$ as $n \to \infty$ and $\delta_n \downarrow 0$.

Next, we show that weak convergence is actually in $C\big[\tau_l, \tau_u \big]$. The expectile processes (\ref{eq:expectileprocess}) have continuous sample paths.
As for the limiting Gaussian process, it suffices to show continuity of the sample paths of the limit Gaussian process of the empirical process corresponding to the function class (\ref{eq:functionsDonsker}), since the inverse of the Fr\'echet derivative in (\ref{eq:frechet}) is simply multiplication by a fixed continuous function. By \citet[Lemma 18.15]{vaart}, the limit process can be constructed to have continuous sample paths w.r.t.~its standard deviation semimetric. In order to check that continuity also holds w.r.t.~the ordinary distance on $[\tau_l, \tau_u]$, we show that
\begin{equation}\label{eq:continuity}
E\big(I_{\tau_2}(\mu_{\tau_2}, Y  ) - I_{\tau_1}(\mu_{\tau_1}, Y  ) \big)^2 \leq C^2 (\tau_2 - \tau_1)^2,\qquad \tau_j \in \big[\mu_{\tau_l}, \mu_{\tau_u} \big],\ j=1,2
\end{equation}
for some $C >0$. But this follows immediately from (\ref{eq:lipschitzcondition}) upon squaring and integrating. This concludes the proof of the theorem.
\end{proof}
%
%%%%%%%%%%%%%%%%%%%%%%%%%%%%%%%%%%%%%%%%%%%%%%
%%%%%%%%%%%%%%%%         References                     %%%%%%%%%%%%%%%
%%%%%%%%%%%%%%%%%%%%%%%%%%%%%%%%%%%%%%%%%%%%%%
%

%

\begin{thebibliography}{00}

\bibitem[Arcones(2000)]{arcones}
Arcones, M.A., 2000.
{\itshape M-Estimators Converging to a Stable Limit.} Journal of Multivariate Analysis 74, 193-221.

\bibitem[Bellini(2012)]{bellini2}
Bellini, F., 2012.
{\itshape Isotonicity properties of generalized quantiles.} Statistics and Probability Letters 82, 2017-2024.

\bibitem[Bellini et al.(2014)]{bellini}
Bellini, F., Klar, B., Müller, A.,  Rosazza Gianin, E., 2014.
{\itshape Generalized quantiles as risk measures.} Insurance: Mathematics and Economics 54, 41-48.

\bibitem[Bellini and Di Bernardino(2015)]{bellini3}
Bellini, F., Di Bernardino, E., 2015.
{\itshape Risk management with expectiles.} The European Journal of Finance, DOI: 10.1080/1351847X.2015.1052150

\bibitem[Breckling and Chambers(1988)]{breckling}
Breckling, J., Chambers, R., 1988.
{\itshape M-Quantiles.} Biometrika 75, 761-771.

%\bibitem[Crocetta and Loperfido(2005)]{crocetta}
%Crocetta, C., Loperfido, N., 2005.
%{\itshape The exact sampling distribution of L-statistics.} Metron 63, 213-223.

\bibitem[Delbaen(2013)]{delbaen}
Delbaen, F. 2013.
{\itshape A remark on the structure of expectiles.} Preprint, arXiv:1307. 5881v1

\bibitem[Embrechts et al.(1997)]{embrechts}
Embrechts, P., Klüppelberg, C., and Mikosch, T., 1997.
{\itshape Modelling Extremal Events for Insurance and Finance.} Springer.

\bibitem[Emmer et al.(2015)]{emmer}
Emmer, S., Kratz, M., Tasche, D., 2015.
{\itshape What is the best risk measure in practice? A comparison of standard measures.}
Journal of Risk 18, 31-60.
%ESSEC Business School: Paris, France, 2014; preprint, arXiv:1312.1645

\bibitem[Gneiting(2011)]{gneiting}
Gneiting, T., 2011.
{\itshape Making and evaluating point forecasts.} Journal of the American Statistical Association, 106, 746-762.

\bibitem[Hjort and Pollard (1993)]{hjort}
Hjort, N.L. and Pollard, D., 1993.
{\itshape Asymptotics for minimizers of convex processes.} Working paper, {arXiv:1107.3806} [math.ST].

%\bibitem[Mason(1981)]{mason}
%Mason, D.M., 1981.
%{\itshape Asymptotic Normality of Linear Combinations of Order Statistics with a Smooth Score Function.} Ann. Statist. 9, 899-908.

\bibitem[Kosorok (2008)]{kosorok}
Kosorok, M.R., 2008.
{\itshape Introduction to Empirical Processes and Semiparametric Inference.} Springer, New York.


\bibitem[Kr\"atschmer and Z\"ahle(2016)]{kraetschmer}
Kr\"atschmer, V., Z\"ahle, H., 2016.
{\itshape Statistical inference for expectile-based risk measures.} Preprint, arXiv:1601.05261

\bibitem[Newey and Powell(1987)]{newey}
Newey, W., Powell, J., 1987.
{\itshape Asymmetric least squares estimation and testing.} Econometrica 55, 819-847.

\bibitem[Nolan(2015)]{nolan}
Nolan, J.P., 2015.
{\itshape Stable Distributions - Models for Heavy Tailed Data}. Birkhauser.
In progress, Chapter 1 online at academic2.american.edu/$\sim$jpnolan.

%\bibitem[Randles(1982)]{randles}
%Randles, R.H., 1982.
%{\itshape On the Asymptotic Normality of Statistics with Estimated Parameters.} Ann. Statist. 10, 462-474.

%\bibitem[Shorack(2000)]{shorack}
%Shorack, G.R., 2000.
%{\itshape Probability for Statisticians.} Springer.

%\bibitem[Stigler(1974))]{stigler}
%Stigler, S.M., 1974.
%{\itshape Linear Functions of Order Statistics with Smooth Weight Functions.} Ann. Statist. 2, 676-693.

\bibitem[Pfanzagl(1969)]{pfanzagl}
Pfanzagl, J., 1969.
{\itshape  On the Measurability and Consistency of Minimum Contrast Estimates.} Metrika, 14, 249-272.

\bibitem[Schulze-Waltrup et al.(2014)]{kneib}
Schulze-Waltrup, L., Sobotka, F., Kneib, T., Kauermann, G.,(2014)
{\itshape Expectile and quantile regression - David and Goliath?} Statistical Modeling, DOI: 10.1177/1471082X14561155.

\bibitem[Tran et. al.(2014)]{tran}
Tran, N.M., Osipenko, M., Haerdle, W.K., 2014.
{\itshape Principal Component Analysis in an Asymmetric Norm.} arXiv:1401.3229

\bibitem[Van der Vaart(1998)]{vaart}
Van der Vaart, A.W., 1998.
{\itshape Asymptotic Statistics.} Cambridge University Press.

\bibitem[Van der Vaart(1995)]{vaart95}
Van der Vaart, A.W., 1995.
{\itshape Efficiency of infinite dimensional M-estimators.} Statistica Neerlandica 49, 9-30.

\bibitem[Ziegel(2014)]{ziegel14}
Ziegel, J., 2014.
{\itshape Coherence and elicitability.} Mathematical Finance, DOI: 10.1111/mafi.12080.

\end{thebibliography}
\end{document}